\tikzstyle{overbrace text style}=[font=\tiny, above, pos=.5, yshift=5pt]
\tikzstyle{overbrace style}=[decorate,decoration={brace,raise=5pt,amplitude=3pt}]
\definecolor{cadmiumgreen}{rgb}{0.0, 0.42, 0.24}
\newtheorem{theorem}{Theorem}[section]
\newtheorem{lemma}[theorem]{Lemma}
\theoremstyle{definition}
\newcommand{\cost}{\text{cost}}
\newcommand{\bx}{\mathbf{x}}
\begin{document}

\allowdisplaybreaks

\title{\bf Settling the Distortion of Distributed Facility Location}

\author[1]{Aris Filos-Ratsikas}
\author[2]{Panagiotis Kanellopoulos}
\author[2]{\\ Alexandros A. Voudouris}
\author[2]{Rongsen Zhang}

\affil[1]{School of Informatics, University of Edinburgh, UK}
\affil[2]{School of Computer Science and Electronic Engineering, University of Essex, UK}

\renewcommand\Authands{ and }
\date{}

\maketitle

\begin{abstract}
We study the {\em distributed facility location problem}, where a set of agents with positions on the line of real numbers are partitioned into disjoint districts, and the goal is to choose a point to satisfy certain criteria, such as optimize an objective function or avoid strategic behavior. A mechanism in our distributed setting works in two steps: For each district it chooses a point that is representative of the positions reported by the agents in the district, and then decides one of these representative points as the final output. We consider two classes of mechanisms: {\em Unrestricted} mechanisms which assume that the agents directly provide their true positions as input, and {\em strategyproof} mechanisms which deal with strategic agents and aim to incentivize them to truthfully report their positions. For both classes, we show tight bounds on the best possible approximation in terms of several minimization social objectives, including the well-known social cost (total distance of agents from chosen point) and max cost (maximum distance among all agents from chosen point), as well as other fairness-inspired objectives that are tailor-made for the distributed setting.
\end{abstract}

\section{Introduction}

The theory of social choice deals with the fundamental question of how to aggregate the opinions or \emph{preferences} of diverse individuals into a collective decision. The quality of such a social decision can be measured in several ways, such as based on axiomatic properties, as is usually the case in economics, or qualitative metrics, an approach mainly stemming from the literature in computer science. The most prominent such metric is that of \emph{distortion} \citep{procaccia2006distortion}, which captures precisely the (in)efficiency of a social choice rule, or a class of such rules that often operate under some restrictions, such as the lack of expressive elicitation of the preferences of the agents. 

The distortion of social choice rules (or \emph{mechanisms}) has been a focal point of research over the past decade, for many different settings; see the recent survey of \citet{survey2021} for an overview. The vast majority of previous works assume a basic setting in which a set of agents have cardinal (i.e., numerical) preferences over a set of possible outcomes (alternatives), and the goal is to quantify the best possible distortion of mechanisms that are given as input limited information about the preferences of the agents (usually rankings that are consistent with the cardinal values) in terms of the \emph{social welfare} objective, the total value of the agents for the chosen outcome \citep{anshelevich2015approximating,boutilier2015optimal,ebadian2022optimized,gkatzelis2020resolving}. 

In many cases, however, the situation is often not that simple. For example, in elections, the agents (now voters) are naturally or artificially partitioned into districts, which elect their representatives, and based on these representatives only a final winner is chosen. More generally, the decision-making process is often \emph{distributed}, in the sense that decisions are first made at a local level, among disjoint sets of agents, and then these decisions are aggregated into a collective outcome. These types of situations are not captured by the simple setting laid out above, and bring forward important challenges and complications when measuring the efficiency of social choice mechanisms.

To capture problems of a more complex nature like the ones mentioned above, \citet{FMV20} initiated the study of the distortion in \emph{distributed social choice}, where decisions are made by mechanisms that operate as follows: The mechanism first chooses a representative alternative for each district according to a local election with the agents of the district as voters, and then chooses one of the representatives as the winner. In their work, \citet{FMV20} considered a setting with agents that have normalized cardinal valuations over the possible outcomes. In follow-up work, \cite{AFV22} studied the same question in the very popular \emph{metric social choice} setting, which has dominated the literature of the distortion over the years. In this setting, agents and alternatives are points on a metric space, and distances capture either physical or ideological distances along different axes. The results of \citet{AFV22} identify mechanisms with low distortion bounds not only for the social welfare but also for several other objectives which are appropriate for the distributed setting, and explore the limitations in the design of distributed mechanisms via (almost) matching lower bounds.  

Importantly, the work of \cite{AFV22} only considers a \emph{discrete} social choice setting, in which there is a finite set of alternatives over which the agents are required to choose. Many real-world problems are better modeled as settings where there is a \emph{continuum} of alternatives (e.g., captured by the line of real numbers). Traditionally, this setting has become known as \emph{facility location} \citep{PT09} and its centralized variant is one of the most well-studied topics in social choice theory; see the recent survey of \citet{chan2021mechanism} for a detailed overview. The distributed variant of the continuous setting was first studied by \citet{FV21}, who provided upper and lower bounds on the distortion of mechanisms for the social cost objective, the sum of costs of all the agents. \citet{FV21} considered two types of mechanisms: (a) mechanisms that are only constrained by the fact that they operate in a distributed environment, and (b) mechanisms that are also constrained to be \emph{strategyproof}, i.e., they do not provide incentives to the agents to lie about their preferences. While for the latter case the authors identified the mechanisms with the best possible distortion, for the former case they only managed to show that the distortion lies in the interval $[2,3]$ leaving open the question of whether a mechanism with distortion $2$ is actually possible. Our first contribution is to settle this open question in the affirmative: we design a novel mechanism for distributed facility location with a distortion of $2$ for the social cost objective. 

Besides the social cost objective, \cite{AFV22} identified three more objectives in the discrete setting  which are particularly meaningful for the distributed setting, namely the maximum cost of any agent within any district, the maximum of the sum of costs of the agents in each district, and the sum of the maximum costs of the agents in each district. Following a similar approach, we study these four objectives in the continuous setting and provide upper and lower bounds on the distortion of distributed mechanisms, both with and without the strategyproofness requirement. All of our bounds are \emph{tight}, meaning that we completely settle the distortion of distributed facility location on the real line. We highlight the distributed facility location setting that we focus on, as well as our results, in more detail below.

\subsection{Setting and results}
We consider a facility location setting with a set of {\em agents} that are positioned in the line of real numbers and are partitioned into disjoint {\em districts}. A {\em distributed mechanism} takes as input the positions of the agents and outputs a single point of the line where a public facility is to be located. This decision is made as follows: For each district, the mechanism chooses a location that is representative of the positions of the agents therein. Afterwards, it chooses the output to be one of the locations that represent the districts. The mechanism is distributed in the sense that the choice of the representative location of each district depends only on the positions reported by the agents that belong to the district. 

We design deterministic distributed mechanisms that satisfy various criteria of interest and achieve the best possible distortion bounds. First, we aim to design distributed mechanisms to approximately optimize social objectives that are functions of the distances between the chosen locations and the positions of the agents. Following the work of \cite{AFV22}, we focus on the following objectives: 
\begin{itemize}
    \item The total distance of the agents ({\em social cost}).
    \item The maximum distance among all agents ({\em max cost}).
    \item The total maximum agent distance in each district ({\em Sum-of-Max cost}) 
    \item The maximum total agent distance in each district ({\em Max-of-Sum cost}).
\end{itemize}

To account for the possibly different sizes of the districts, \citet{AFV22} considered the average total distance of the agents in their objectives (in and over the districts) instead of the sum of distances (i.e., they studied the Average-of-Average, Average-of-Max, and Max-of-Average objectives). To clearly demonstrate the arguments in the proofs without overcomplicating the notation, in this paper we focus exclusively on symmetric districts that contain the same number of agents, in which case the average is equivalent to the sum. Similarly to the case of \citet{AFV22}, our proofs can be easily extended to the case of asymmetric districts by considering the average total distance in the objectives; for the Max cost no change is required at all.

Our first contribution is the design of a novel mechanism that achieves a distortion of $2$ for the social cost; as mentioned above, this settles a question left open in the work of \citet{FV21} in the affirmative, matching their lower bound of $2$. For the remaining objectives we provide mechanisms as well as lower bounds establishing that these mechanisms achieve the best possible distortion. The precise bounds are shown in the first column of Table~\ref{tab:results}. Quite interestingly, and perhaps unexpectedly, our mechanism for the Sum-of-Max objective is \emph{optimal}, that is, it achieves a distortion of $1$. This demonstrates that for this particular objective, the distributed nature of the decision making does not influence the quality of the decision at all, and stands in contrast to the results of \citet{AFV22} for the same objective in the discrete setting.

Next, we consider strategyproof mechanisms, i.e., mechanisms that do not incentivize the agents to misreport their locations. This type of mechanisms were considered by \citet{FV21} who settled their distortion for the social cost. For the remaining three objectives, strategyproof mechanisms have not been previously studied, not even in the discrete setting of \citet{AFV22}. We show tight bounds by carefully composing centralized statistics mechanisms for choosing the district representatives and the final location; in particular, depending on the objective at hand, we appropriately choose the values of two parameters $p$ and $q$ to define mechanisms that work by choosing the position of the $q$-th agent in a district as its representative, and then select the $p$-th representative as the output location. Our results for strategyproof mechanisms are shown in the second column of Table~\ref{tab:results}.

\begin{table}[t]
    \centering
    \begin{tabular}{c | c | c}
                & Unrestricted                  & Strategyproof \\\hline
    Social cost & $2$ (Section~\ref{sec:social-cost})     & $3^\star$ \\
    Max cost    & $2$ (Section~\ref{sec:max-cost})     & $2$ (Section~\ref{sec:max-cost})   \\
    Sum-of-Max  & $1$ (Section~\ref{sec:sum-of-max:unrestricted})     & $1+\sqrt{2}$  (Section~\ref{sec:sum-of-max:sp})  \\
    Max-of-Sum  & $2$ (Section~\ref{sec:max-of-sum:unrestricted})     &  $1+\sqrt{2}$ (Section~\ref{sec:max-of-sum:sp})  \\\hline
    \end{tabular}
    \caption{Overview of our tight distortion bounds for deterministic distributed mechanisms. The bound of $3$ for the social cost and the class of strategyproof mechanisms marked with a $\star$ is due to \citet{FV21}.}
    \label{tab:results}
\end{table}

\subsection{Related work}
The distortion was originally defined by \citet{procaccia2006distortion} to quantify the loss in social welfare due to social choice mechanisms having access only to preference rankings over the possible outcomes, rather than to the complete cardinal structure of the preferences. The distortion of mechanisms has been studied for several social choice problems, including single-winner voting, multi-winner voting, participatory budgeting, and matching in both the normalized utilitarian setting~\citep{boutilier2015optimal,caragiannis2017subset,benade2017participatory,ebadian2022optimized,Aris14}, as well as the metric setting~ \citep{anshelevich2015approximating,anshelevich2017randomized,CSV22,charikar2022randomized,gkatzelis2020resolving,kempe2020communication,kempe22veto}. Recently, the notion of the distortion has been more broadly interpreted as capturing the deterioration of an aggregate objective due to limited information, giving rise to works on communication complexity~\citep{mandal2019thrifty,mandal2020optimal}, query complexity~\citep{amanatidis2020peeking,amanatidis2021matching,amanatidis2022don,ma2020matching}, and other tradeoffs between information and distortion~\citep{abramowitz2019awareness}, as well as the distortion of distributed mechanisms that we study in the present paper~\citep{FMV20,AFV22,FV21}. We refer the reader to the survey of \cite{survey2021} for a detailed exposition.

The literature on strategyproof facility location is also rather extensive. \citet{PT09} were the first to study strategyproof facility location problems as part of their approximate mechanism design agenda. Since then, several variants of the problem have been proposed and studied, including settings in which there are several facilities to locate~\citep{lu2009tighter,lu2010asymptotically,fotakis2016strategyproof}, the space of possible locations is restricted~\citep{feldman2016voting,serafino2016,KVZ22}, the agents have heterogeneous preferences over the facilities~\citep{anastasiadis2018heterogeneous,feigenbaum2015strategyproof,duan2019heterogeneous}, only some of the available facilities can be located~\citep{DFV22,elkind2022approval}, or the aim is to optimize different objectives~\citep{DBLP:conf/aaai/Filos-RatsikasL15,qcai2016minimaxenvy,Zhou2022group-fair}.
We refer the reader to the survey of \citet{chan2021mechanism} for more details. 

\section{Preliminaries}
An instance of our problem is a tuple $I=(N,\bx,D)$, where
\begin{itemize}
\item $N$ is a set of $n$ {\em agents}.
\item $\bx = (x_i)_{i \in N}$ is a vector containing the {\em position} $x_i \in \mathbb{R}$ of agent $i$ on the line of real numbers.
\item $D = \{d_1, ..., d_k\}$ is a set of $k \geq 1$ {\em districts}. Each district $d \in D$ contains a set $N_d \subseteq N$ of agents such that $N_d \cap N_{d'}=\varnothing$ and $\bigcup_{d \in D}N_d = N$. We mainly focus on the case of {\em symmetric} districts so that $|N_d| :=\lambda := n/k$ for every $d \in D$. 
\end{itemize}
A {\em distributed mechanism} $M$ is used to decide the location of a facility based on the positions reported by the agents and the composition of the districts. In particular, given an instance $I$, a distributed mechanism works by implementing the following two steps:
\begin{itemize}
\item Step 1: For each district $d \in D$, using only the positions of the agents in $N_d$, the mechanism chooses a {\em representative location} $y_d \in \mathbb{R}$ for the district.
\item Step 2: Given the representative locations of the districts, the mechanism outputs a single location $M(I) \in \{y_d\}_{d \in D}$ as the {\em winner}.
\end{itemize}
If a location $z$ is chosen, then the distance $\delta(x_i,z) = |x_i-z|$ between the position $x_i$ of agent $i$ and $z$ is the {\em individual cost} of agent $i$ for $z$. 

\subsection{Social objectives and strategyproofness}
We want to design mechanisms that output locations which are efficient according to a social objective. 
Let $z \in \mathbb{R}$ be any location. We consider the following four social minimization objectives:
\begin{itemize}
\item The {\em Sum} cost (or {\em social cost}) of location $z$ is the total individual cost of all agents for $z$: 
$$\sum_{i \in N} \delta(x_i,z) = \sum_{d \in D} \sum_{i \in N_d} \delta(x_i,z).$$

\item The {\em Max} cost of location $z$ is the maximum individual cost over all agents for $z$:  
$$\max_{i \in N} \delta(x_i,z) = \max_{d \in D} \max_{i \in N_d} \delta(x_i,z).$$

\item The {\em Sum-of-Max} cost of location $z$ is the sum over each district of the maximum individual cost therein:  
$$\sum_{d \in D} \max_{i \in N_d} \delta(x_i,z).$$

\item The {\em Max-of-Sum} cost of location $z$ is the maximum over each district of the total individual cost therein:  
$$\max_{d \in D} \sum_{i \in N_d} \delta(x_i,z).$$
\end{itemize}
To simplify our notation, whenever the social objective is clear from context, we will use $\cost(z|I)$ to denote the cost of $z \in \mathbb{R}$ according to the objective function at hand in instance $I$.  

Another goal is to design mechanisms that are resilient to strategic manipulation, that is, they do not allow the agents to unilaterally affect the outcome in their favor (i.e., lead to a location with smaller individual cost) by reporting false positions. Formally, a mechanism is {\em strategyproof} if for any pair of instances $I = (N,(\bx_{-i},x_i),D)$ and $J = (N, (\bx_{-i},x_i'),D)$ that differ in the position of a single agent $i$, it holds that $\delta(x_i,M(I)) \leq \delta(x_i,M(J))$.

\subsection{Distortion of mechanisms}
The {\em distortion} of a distributed mechanism $M$ with respect to some social objective (which defines the cost of each possible location) is the worst case (over all instances) of the ratio between the cost of the location chosen by the mechanism and the minimum cost of any location: 
\begin{align*}
    \sup_{I=(N,\bx,D)} \frac{\cost(M(I)|I)}{\min_{z \in \mathbb{R}}\cost(z|I)}
\end{align*}
By definition, the distortion is always at least $1$. When the numerator is positive and the denominator is (extremely close to) $0$, we will say that the distortion is unbounded. Our goal is to design distributed mechanisms that have an as low distortion as possible with respect to the social objectives defined above. We will consider both unrestricted mechanisms which assume that the agents act truthfully, as well as strategyproof mechanisms which aim to avoid strategic manipulations. 

\subsection{Useful observations} \label{useful}
Before we proceed with the presentation of our main technical results in the upcoming sections, we first state some useful properties. The bounds on the distortion of some of our mechanisms will follow by characterizing worst-case instances, and for that we will need the inequality
\begin{align} \label{eq:main-inequality}
    \frac{\alpha +\gamma}{\beta + \gamma} < \frac{\alpha}{\beta},
\end{align}
which holds for any $\alpha > \beta\geq 0$ and $\gamma > 0$.

\citet{FV21} observed that any distributed mechanism with finite distortion with respect to the social cost (sum objective) must be cardinally unanimous. We extend this result by showing this is true for any of the social objectives we consider in this paper. Formally, a mechanism is {\em cardinally-unanimous} if it chooses the representative location of a district to be $z$ whenever all agents in the district are positioned at $z$. 

\begin{lemma} \label{lem:unanimous}
Any distributed mechanism that achieves finite distortion with respect to any social objective $F \in \{\text{Sum}, \text{Max}, \text{Sum-of-Max}, \text{Max-of-Sum}\}$ must be cardinally-unanimous. 
\end{lemma}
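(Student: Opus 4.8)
The plan is to argue by contradiction: assume $M$ achieves finite distortion with respect to some objective $F \in \{\text{Sum}, \text{Max}, \text{Sum-of-Max}, \text{Max-of-Sum}\}$ but is \emph{not} cardinally-unanimous, and then exhibit a single instance on which the distortion ratio of $M$ is unbounded, a contradiction.

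First I would unpack the negation of cardinal-unanimity. If $M$ is not cardinally-unanimous, then there exist a point $z \in \mathbb{R}$ and a district all of whose agents are positioned at $z$ but whose representative location is some $y \neq z$. The key observation to extract here is that, because $M$ is distributed — the representative of a district depends only on the positions reported by the agents in that district — \emph{every} district whose $\lambda$ agents all sit at $z$ receives exactly this same representative $y$. (This is where the distributed constraint, together with the fact that we work with symmetric districts so every district has the same number $\lambda$ of agents, is used.)

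Next I would build the bad instance $I=(N,\bx,D)$: take $k$ districts, each of size $\lambda$, and place all $n=k\lambda$ agents at the single point $z$. By the previous observation each district's representative equals $y$, so in Step 2 the mechanism is forced to output $M(I)=y\neq z$. Since all agents are co-located at $z$, the location $z$ has cost $0$ under each of the four objectives, so $\min_{w\in\mathbb{R}}\cost(w\mid I)=0$. It then remains only to verify that $\cost(y\mid I)>0$ for each objective, which is immediate: this cost equals $n|y-z|$ for Sum, $|y-z|$ for Max, $k|y-z|$ for Sum-of-Max, and $\lambda|y-z|$ for Max-of-Sum, all strictly positive. Hence on $I$ the distortion ratio is a positive number over $0$, i.e.\ unbounded, contradicting finiteness of the distortion of $M$.

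This is essentially a one-instance reduction, so I do not expect a serious technical obstacle; the only step that needs a careful word is the claim that all districts in $I$ receive the same representative $y$, which follows directly from the definition of a distributed mechanism, and the rest is the routine check that each of the four objective values vanishes at $z$ and is positive at $y$.
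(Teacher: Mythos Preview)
Your proposal is correct and follows essentially the same approach as the paper: assume the mechanism is not cardinally-unanimous, place all agents of every district at the offending point $z$, observe that the distributed nature of the mechanism forces every district's representative (and hence the winner) to be $y\neq z$, and conclude that the ratio is a positive cost over zero. Your write-up is slightly more explicit (you spell out the role of the distributed constraint and compute the four objective values at $y$), but the argument is the same.
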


\begin{proof}
Let $M$ be a distributed mechanism that is {\em not} cardinally-unanimous. Consequently, there must exist a location $z$ such that when all the agents of a district are positioned at $z$, the mechanism decides the representative location of the district to be some $y \neq z$. Now, consider an instance in which all agents (no matter which district they belong to) are positioned at $z$. Given the behavior of the mechanism, $y$ is the representative location of all districts, and thus it must be the winner. However, $\cost(z) = 0$ and $\cost(y) > 0$ for any social objective $F$, and thus the distortion is unbounded. So, to achieve finite distortion, any mechanism must be cardinally-unanimous. 
\end{proof}

We next show that each member of a class of intuitive distributed mechanisms is strategyproof. Let $p \in [k]$ and $q \in [\lambda]$. The $p$-Statistic-of-$q$-Statistic mechanism first chooses the representative location of each district to be the position of the $q$-th ordered agent therein, and then outputs the $p$-th ordered representative location as the winner. For example, if $p = \lfloor (k+1)/2 \rfloor$ and $q = \lfloor (\lambda+1)/2 \rfloor$, the mechanism selects the position of the (leftmost) median agent in each district to be its representative location and then selects the (leftmost) median representative location as the winner. All strategyproof mechanisms that achieve the best possible distortion for the various social objectives we consider are members of this class. The next lemma shows that any such mechanism is strategyproof, and will allow us to only focus on bounding the distortion in the next sections. 

\begin{lemma} \label{lem:statistic-sp}
For any $p \in [k]$ and $q \in [\lambda]$, the $p$-Statistic-of-$q$-Statistic mechanism is strategyproof. 
\end{lemma}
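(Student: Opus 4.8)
The plan is to show that, with the reports of all other agents held fixed, the map sending agent $i$'s reported position to the location output by the mechanism is the projection onto a fixed closed interval of the line; strategyproofness then follows from the elementary fact that one's true position is best served by its own projection onto that interval.

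Fix an agent $i$ and let $d$ be the district containing $i$. Since Step~1 computes the representative of each district $d'\neq d$ using only the positions of the agents in $N_{d'}$, and $i\notin N_{d'}$, a deviation by $i$ leaves every representative other than $y_d$ unchanged (the parameters $p$ and $q$ are of course fixed). So the first step is to record the standard ``generalized median'' form of an order statistic. Writing $z_1\le\dots\le z_{\lambda-1}$ for the (fixed) positions of the agents in $N_d\setminus\{i\}$, the $q$-th smallest of all $\lambda$ positions in $N_d$ equals $\mathrm{med}(z_{q-1},x_i,z_q)$, with the conventions $z_0=-\infty$ and $z_\lambda=+\infty$ and order statistics taken with multiplicity; equivalently $y_d=\mathrm{proj}_{[a,b]}(x_i)$ for the interval $[a,b]=[z_{q-1},z_q]$, which is independent of $i$'s report. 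Similarly, writing $r_1\le\dots\le r_{k-1}$ for the (fixed) other representatives, the winner is the $p$-th smallest of $\{y_d,r_1,\dots,r_{k-1}\}$, which equals $\mathrm{med}(r_{p-1},y_d,r_p)=\mathrm{proj}_{[c,e]}(y_d)$ with $[c,e]=[r_{p-1},r_p]$, again independent of $i$'s report.

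Hence the location output by the mechanism, as a function of $x_i$, is $\mathrm{proj}_{[c,e]}\big(\mathrm{proj}_{[a,b]}(x_i)\big)$. The one nontrivial step is the claim that a composition of two interval projections on the line is itself a projection onto a fixed interval: concretely, it equals $\mathrm{proj}_{[a,b]\cap[c,e]}$ when the intersection is nonempty, and is the constant map onto the endpoint of $[c,e]$ nearest $[a,b]$ otherwise. I expect this to be the main obstacle, though a mild one: it follows from a short case analysis on the relative order of $a,b,c,e$, using only that interval projection is monotone, $1$-Lipschitz and idempotent (the potentially unbounded cases $q\in\{1,\lambda\}$ or $p\in\{1,k\}$ are handled identically, with projection onto a half-line meaning the corresponding $\min$ or $\max$). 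Denote the resulting fixed closed interval — possibly unbounded, possibly a single point — by $K$; it does not depend on $i$'s report.

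Finally, let $I=(N,(\bx_{-i},x_i),D)$ and $J=(N,(\bx_{-i},x_i'),D)$ be two instances differing only in agent $i$'s report, and let $x_i$ be $i$'s true position. By the above, $M(I)=\mathrm{proj}_K(x_i)$ and $M(J)=\mathrm{proj}_K(x_i')\in K$. Since $\mathrm{proj}_K(x_i)$ is by definition the point of $K$ closest to $x_i$, we obtain $\delta(x_i,M(I))=|x_i-\mathrm{proj}_K(x_i)|\le|x_i-M(J)|=\delta(x_i,M(J))$, which is exactly the strategyproofness condition. Since $i$ was arbitrary, the proof is complete.
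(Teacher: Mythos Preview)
Your proof is correct and takes a genuinely different route from the paper's. The paper argues by a direct case analysis on the relative order of $x_i$, the district representative $y$, and the winner $w$: in each of the three cases it checks that any deviation either fails to move the representative in a direction that could alter the winner, or moves the winner weakly away from $x_i$. Your argument instead identifies the structural reason the mechanism is strategyproof: with everyone else fixed, the in-district order statistic is $\mathrm{proj}_{[z_{q-1},z_q]}(\cdot)$ and the over-district order statistic is $\mathrm{proj}_{[r_{p-1},r_p]}(\cdot)$, and you observe that the composition of two interval projections on the line is again a projection onto a fixed (possibly degenerate or unbounded) interval $K$, so truthful reporting yields the nearest point of $K$ by definition. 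This is essentially the ``phantom voters'' / generalized-median viewpoint, and it buys you a cleaner and more modular argument that would extend without change to any number of nested order-statistic layers; the paper's case analysis is more elementary but is tied to the two-level structure and, as written, glosses over a detail in its third case (it asserts the new winner is the misreport $x_i'$ itself, whereas in fact the new winner is only guaranteed to lie weakly to the right of $w$ --- still farther from $x_i$, so the conclusion stands). Your handling of the boundary cases $q\in\{1,\lambda\}$ and $p\in\{1,k\}$ via half-line projections is also correct.
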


\begin{proof}
Let $M$ be the $p$-Statistic-of-$q$-Statistic mechanism. 
Consider any instance $I = (N, \bx, D)$ and let $w = M(I)$ be the location chosen by $M$. 
Let $i$ be any agent that belongs to some district $d \in D$ that is represented by $y$. If the position of $i$ is the final winner, then $i$ clearly has no incentive to deviate. So, without loss of generality, assume that the winner is some location $w > x_i$. Observe that to affect the outcome of the mechanism, agent $i$ must first be able to affect the representative of $d$. We distinguish between the following cases.
\begin{itemize}
    \item If $y < x_i$, then agent $i$ would have to report a position $x_i' < y$ to change the representative of $d$, but such a position cannot affect the final winner as the order of representatives remains the same ($w$ would still be at the right of the representative for district $d$).
    \item If $y > w$, then agent $i$ would have to report a position $x_i' > y$ to change the representative of $d$ to $x_i'$. However, this again cannot affect the final winner as the order of representatives remains the same ($w$ would still be at the left of the representative for district $d$). 
    \item If $y \in [x_i, w]$, then agent $i$ could potentially affect the outcome by reporting a position $x_i' > w$ to change the order of representatives, but this would lead to a higher individual cost as the new winner $x_i'$ would be farther away. 
\end{itemize}
Hence, agent $i$ has no incentive to deviate, thus proving that the mechanism is strategyproof. 
\end{proof}


\section{Social cost} \label{sec:social-cost}
We begin with the social cost (Sum) objective. In previous work, \citet{FV21} showed that the {\sc Median-of-Medians} mechanism (that is, the {\sc $\lfloor \lambda/2 \rfloor$-Statistic-of-$\lfloor k/2 \rfloor$-Statistic} mechanism) has distortion at most $3$, and this is best possible strategyproof mechanism. For the class of unrestricted mechanisms, they showed a lower bound of $2$, thus leaving a gap between $2$ and $3$. Here, we complete the picture by showing a tight bound of $2$ for unrestricted mechanisms. We do this by considering the {\sc Median-of-TruncatedAvg} mechanism which works as follows: For each district, the mechanism considers a set of $\lambda/2$ agents ranging from the $(\lambda/4+1)$-th leftmost to the $(3\lambda/4)$-th leftmost\footnote{For simplicity, we present the mechanism assuming that the number of agents in each district is a multiple of $4$; extending the description of the mechanism and the proof is straightforward.}, and chooses their average as the representative location of the district. Then, it chooses the median representative location as the final location. See Mechanism~\ref{mech:MoTA} for a detailed description.

\newcommand\mycommfont[1]{\normalfont\textcolor{blue}{#1}}
\SetCommentSty{mycommfont}
\begin{algorithm}[ht]
\SetNoFillComment
\caption{\sc Median-of-TruncatedAvg}
\label{mech:MoTA}
\For{each district $d\in D$}
{
    $S_d := \{i \in N_d: i \text{ is at least the } (\lambda/4+1) \text{-th and at most the } (3\lambda/4)\text{-th leftmost agent}\}$\;
    $y_d := \frac{\sum_{i \in S_d}x_i}{|S_d|}$\;
}
\Return $w := \text{Median}_{d \in D}\{y_d\}$\; 
\end{algorithm}

To bound the distortion of {\sc Median-of-TruncatedAvg}, we will characterize the structure of worst-case instances, where the distortion of the mechanism is maximized and is strictly larger than $1$. Let $w$ be the location chosen by the mechanism when given as input a worst-case instance, and denote by $o$ the optimal location; since the objective is the social cost, $o$ is the position of the median agent (or any point between the positions of the median agents in case of an even total number of agents). Without loss of generality, we assume that $w < o$; the case $w>o$ is symmetric. 

We first show that there are cases where, starting from an instance with distortion strictly larger than $1$, moving particular agents to appropriate intervals, leads to new instances that have strictly worse distortion. This transformation will be useful when characterizing the worst-case instances for the mechanism. 

\begin{lemma} \label{lem:SC-moving-lemma}
Let $I$ and $J$ be two instances that differ on the position of a single agent $i$, such that $w$ is the location chosen by the mechanism and $o$ is the optimal location for both instances. The distortion of the mechanism when given $J$ as input is strictly larger than its distortion when given $I$ as input in the following cases:
\begin{itemize}
    \item[(a)] $i$ is positioned at $x_i < o$ in $I$, and at $x_i' \in (x_i, o]$ in $J$;
    \item[(b)] $i$ is positioned at $x_i > o$ in $I$, and at $x_i' \in [o,x_i)$ in $J$.
\end{itemize}
\end{lemma}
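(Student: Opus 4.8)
The plan is to observe that in both cases (a) and (b) the claimed moves are moves of a single agent $i$ strictly *towards* the optimal location $o$, so the key is to track two quantities as $i$ moves: the numerator $\cost(w\mid I)$ and the denominator $\cost(o\mid I)$ of the distortion ratio. Since the mechanism's output $w$ is assumed unchanged between $I$ and $J$ (this is part of the hypothesis), and $o$ is also the same in both, it suffices to show that as agent $i$ moves towards $o$, the ratio $\cost(w)/\cost(o)$ strictly increases. The natural tool is inequality~\eqref{eq:main-inequality}: if moving $i$ decreases the denominator by some $\gamma>0$ while decreasing the numerator by at most $\gamma$ — or, even better, decreasing the numerator by strictly less than $\gamma$ — then the ratio strictly increases (provided it was already $>1$, i.e. $\cost(w) > \cost(o)$, which holds in a worst-case instance). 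So I would first reduce both cases to this monotonicity claim.

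First I would handle case (a): $x_i < o$ moves right to $x_i' \in (x_i, o]$. The denominator, $\cost(o)$ with $o$ the median (social cost), decreases by exactly $x_i' - x_i =: \gamma > 0$, since agent $i$'s distance to $o$ drops by exactly that amount and no other agent's distance to $o$ changes. For the numerator $\cost(w)$ with $w < o$: agent $i$'s contribution is $|x_i - w|$. If $x_i \ge w$ then moving $i$ right *increases* $i$'s distance to $w$, so the numerator strictly increases while the denominator strictly decreases — done immediately. If $x_i < w \le o$, then moving $i$ right towards $o$ decreases $|x_i - w|$ by at most $x_i' - x_i = \gamma$ (with equality only if $x_i'\le w$, strict if $x_i'$ crosses $w$); in either subcase the numerator decreases by at most $\gamma$ while the denominator decreases by exactly $\gamma$, and since the numerator strictly exceeds the denominator (worst-case instance, distortion $>1$), inequality~\eqref{eq:main-inequality} — possibly in the degenerate boundary form where the numerator is unchanged — gives a strict increase in the ratio. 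Case (b) is symmetric: $x_i > o$ moves left to $x_i' \in [o, x_i)$, the denominator drops by exactly $x_i - x_i' = \gamma$ (again $o$ is the median and only $i$'s term changes), while since $w < o \le x_i$ agent $i$'s distance to $w$ strictly *decreases* by exactly $\gamma$... so here one must be a little more careful: numerator and denominator drop by the *same* $\gamma$. I'd then invoke~\eqref{eq:main-inequality} directly with $\alpha = \cost(w) > \beta = \cost(o) \ge 0$ and this common $\gamma > 0$ to conclude $\frac{\alpha - \gamma}{\beta - \gamma}$... wait — that is the wrong direction, so instead I note $\frac{\alpha}{\beta} < \frac{\alpha+\gamma}{\beta+\gamma}$ rearranges to say that *subtracting* equal amounts from numerator and denominator strictly *increases* the ratio when $\alpha>\beta$; applying this with the pre-move costs as $\alpha+\gamma,\beta+\gamma$ and post-move as $\alpha,\beta$ yields exactly the claim.

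**Main obstacle.** The genuinely delicate point is case (b) (and the subcase of (a) where $i$ crosses $w$): there the numerator and denominator can decrease by the *same* amount $\gamma$, so the argument hinges entirely on the strict inequality $\cost(w) > \cost(o)$ holding — i.e. that in a worst-case instance the distortion is strictly larger than $1$ — and on the rearranged form of~\eqref{eq:main-inequality}. I would be careful to state explicitly that we are in the regime $\cost(w\mid I) > \cost(o\mid I) = \min_z \cost(z\mid I)$ (which is given in the paragraph preceding the lemma: "worst-case instances, where the distortion is maximized and is strictly larger than $1$"), and to verify the edge case $\cost(o) = 0$ does not arise here because then the distortion would be unbounded and moving agents could not be part of a finite worst-case analysis. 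A secondary bookkeeping concern is confirming that $w$ really is unchanged under the move — but since that is part of the lemma's hypothesis ("$w$ is the location chosen by the mechanism ... for both instances"), no work is needed there; the lemma is only asserting the ratio comparison *given* that the output is preserved.
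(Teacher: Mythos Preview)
Your proposal is correct and follows essentially the same approach as the paper: track how moving agent $i$ towards $o$ changes the numerator $\cost(w)$ and denominator $\cost(o)$, then apply Inequality~\eqref{eq:main-inequality} using the standing assumption that the distortion in $I$ is strictly larger than $1$. The paper's write-up is marginally cleaner in case~(a): instead of splitting on whether $x_i<w$ or $x_i\ge w$, it uses the triangle inequality $\delta(x_i,w)\le \delta(x_i,x_i')+\delta(x_i',w)$ together with the equality $\delta(x_i,o)=\delta(x_i,x_i')+\delta(x_i',o)$ to obtain $\frac{\cost(w|I)}{\cost(o|I)}\le \frac{\alpha+\gamma}{\beta+\gamma}$ in one stroke, and then applies~\eqref{eq:main-inequality} with $\gamma=\delta(x_i,x_i')$; case~(b) is identical but with both relations holding as equalities, exactly as you noted.
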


\begin{proof}
We want to show that 
\begin{align*}
\frac{\cost(w|I)}{\cost(o|I)} < \frac{\cost(w|J)}{\cost(o|J)}.
\end{align*}
For (a), we have that $\delta(x_i,w) \leq \delta(x_i,x_i') + \delta(x_i',w)$ by the triangle inequality, and also $\delta(x_i,o) = \delta(x_i,x_i') + \delta(x_i',o)$; recall our assumption that $w<o$. So, 
\begin{align*}
\frac{\cost(w|I)}{\cost(o|I)} 
= \frac{\sum_{j \neq i} \delta(x_j,w) + \delta(x_i,w) }{\sum_{j \neq i} \delta(x_j,o) + \delta(x_i,o)}
\leq \frac{\sum_{j \neq i} \delta(x_j,w) + \delta(x_i,x_i') + \delta(x_i',w) }{\sum_{j \neq i} \delta(x_j,o) + \delta(x_i,x_i') + \delta(x_i',o)}.
\end{align*}
Since the distortion of the mechanism when given $I$ as input is strictly larger than $1$ and the distances are non-negative, we can apply Inequality \eqref{eq:main-inequality} with $\alpha = \sum_{j \neq i} \delta(x_j,w) + \delta(x_i',w) $, $\beta = \sum_{j \neq i} \delta(x_j,o) + \delta(x_i',o)$ and $\gamma = \delta(x_i,x_i')$, to obtain
\begin{align*}
\frac{\cost(w|I)}{\cost(o|I)} 
< \frac{\sum_{j \neq i} \delta(x_j,w) + \delta(x_i',w) }{\sum_{j \neq i} \delta(x_j,o) + \delta(x_i',o)}
= \frac{\cost(w|J)}{\cost(o|J)}.
\end{align*}
For (b), observe that $\delta(x_i,w) = \delta(x_i,x_i') + \delta(x_i',w)$ and $\delta(x_i,o) = \delta(x_i,x_i') + \delta(x'_i,o)$. Therefore, the desired inequality again follows by appropriately applying Inequality \eqref{eq:main-inequality}.
\end{proof}

We are now ready to show the following useful structural properties of worst-case instances:
\begin{itemize}
    \item At least $k/2$ districts are represented by $w$ (Lemma~\ref{lem:social-cost-P1});
    \item $o$ can be the only other district representative and all agents in such districts are positioned at $o$ (Lemma~\ref{lem:social-cost-P2}).
\end{itemize}

\begin{lemma} \label{lem:social-cost-P1}
There are no district representatives to the left of $w$. 
\end{lemma}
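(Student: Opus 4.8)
The plan is to argue by contradiction, using Lemma~\ref{lem:SC-moving-lemma}(a) as the engine. Recall the standing assumption $w<o$. Suppose that in a worst-case instance $I$ some district $d^\star$ has representative $y_{d^\star}<w$. I will build an instance $J$ that agrees with $I$ except for the position of a single agent, on which the mechanism still outputs $w$ and for which $o$ is still an optimal location, and whose distortion is strictly larger than that of $I$; this contradicts the choice of $I$ and proves the lemma. The content of Lemma~\ref{lem:SC-moving-lemma}(a) is precisely that nudging one agent that lies strictly left of $o$ a little to the right, but not past $o$, strictly increases the distortion, provided the mechanism's output and the optimal location do not move.

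The agent to nudge is the leftmost agent $i$ of district $d^\star$. Since $y_{d^\star}$ is the average of the positions of the agents of $S_{d^\star}\subseteq N_{d^\star}$, each of which is at least $x_i$, we get $x_i\le y_{d^\star}<w<o$, so $x_i<o$. Let $J$ be obtained from $I$ by moving $i$ to $x_i'=x_i+\epsilon$ for a sufficiently small $\epsilon>0$. For $\epsilon$ small enough: $x_i'<o$; every district other than $d^\star$ is untouched, and $y_{d^\star}$ moves by at most $\epsilon$ (one agent's position changes, inside an average of $\lambda/2$ positions, and the membership of $S_{d^\star}$ can only shift among agents sitting at $x_i$), so $y_{d^\star}$ stays strictly below $w$; since the output $w$ is the representative of some district other than $d^\star$ and nothing has crossed it, the mechanism still outputs $w$ on $J$; and since the move does not reorder the agents on the line (ties at $x_i$ require only a routine check), the median agent position --- i.e., the optimal social-cost location --- is unchanged, so $o$ is still optimal for $J$.

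Now $I$ and $J$ differ only in the position of agent $i$, with $x_i<o$ and $x_i'\in(x_i,o]$, and both have mechanism output $w$ and optimal location $o$; Lemma~\ref{lem:SC-moving-lemma}(a) then yields that the distortion of the mechanism on $J$ is strictly larger than on $I$, contradicting the worst-case choice of $I$. Hence no district representative lies to the left of $w$. The one point that needs care is the bundle of invariants in the previous paragraph --- above all, that a small perturbation cannot push the mechanism's output off $w$ --- and the reason it goes through is the choice of $i$ inside the offending district $d^\star$: the representative of $d^\star$ lies strictly below $w$ and therefore absorbs a small perturbation without the median of the representatives changing. Beyond invoking Lemma~\ref{lem:SC-moving-lemma}, no computation is involved.
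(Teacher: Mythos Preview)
Your proof is correct and follows the same overall strategy as the paper: derive a contradiction by invoking Lemma~\ref{lem:SC-moving-lemma}(a) on an agent lying strictly left of $o$, after checking that the mechanism's output $w$ and the optimal location $o$ are preserved. The difference is only in the perturbation chosen. The paper moves a whole set $S\subseteq S_d$ of agents rightward (keeping them at or below $w$) so that the truncated average of the offending district jumps exactly to $w$, and then appeals to the single-agent lemma; you instead nudge just the leftmost agent of $d^\star$ by a small $\epsilon$ and rely on continuity of the truncated average to keep $y_{d^\star}<w$. Your version has the minor advantage that it matches the single-agent hypothesis of Lemma~\ref{lem:SC-moving-lemma} exactly, so no implicit iteration over several agent moves is needed; the paper's version has the advantage of being a one-shot transformation that does not require an $\epsilon$ argument. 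Substantively, the two are the same idea.
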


\begin{proof}
Suppose towards a contradiction that the worst-case instance is such that there is a district $d$ with representative $y < w$. Since $y$ is an average of some agent positions in $d$, there is a set of agents $S \subseteq S_d$ with $x_i \leq w$ for every $i \in S$. We move each agent $i \in S$ to a new position $x_i'$ such that $x_i < x_i' \leq w$ and the truncated average of the agents in $d$ becomes $w$. Clearly, the outcome of the mechanism, as well as the optimal location, remain the same in the new instance; $w$ is still the median representative, and the position of the overall median agent did not change. By Lemma~\ref{lem:SC-moving-lemma}(a) and since $w<o$, moving any agent $i\in S$ to $x_i' \leq w$ leads to a new instance with strictly larger distortion, which contradicts the fact that we start from a worst-case instance. 
\end{proof}

\begin{lemma} \label{lem:social-cost-P2}
Besides $w$, the only other district representative can be $o$, and all agents in such districts are positioned on $o$.
\end{lemma}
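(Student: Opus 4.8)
The plan is to leverage the moving lemma (Lemma~\ref{lem:SC-moving-lemma}) to push agents as close to $o$ as possible without disturbing either the mechanism's output $w$ or the optimum $o$. We already know from Lemma~\ref{lem:social-cost-P1} that every district representative lies in $[w, \infty)$ and that at least $k/2$ of them equal $w$ (this last fact being exactly Lemma~\ref{lem:social-cost-P1}'s companion statement; it follows since $w$ is the median representative and none lie strictly to its left). So the remaining work concerns the districts whose representative is some $y_d > w$, i.e. $y_d \ge w$ with the inequality possibly strict, and I want to argue that in a worst-case instance each such district actually has representative exactly $o$ with all its agents sitting on $o$.

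First I would take any district $d$ with representative $y_d > w$ and consider moving its agents toward $o$. For an agent $i \in d$ with $x_i < o$, by Lemma~\ref{lem:SC-moving-lemma}(a) I may slide $x_i$ rightward up to $o$; for an agent with $x_i > o$, by Lemma~\ref{lem:SC-moving-lemma}(b) I may slide $x_i$ leftward down to $o$; in both cases the distortion strictly increases, so in a true worst-case instance every agent in $d$ must already be at $o$ — \emph{provided} these moves leave $w$ and $o$ unchanged. Checking that invariant is the crux. Moving agents within a single district toward $o$ can only change that district's truncated average (not any other district's), and it moves it monotonically toward $o$; since $o \ge w$ and $o$ is also $\ge$ every representative… wait, that is false in general, so the careful point is: I should move the agents of $d$ one at a time, and after each infinitesimal move re-examine whether the set of districts represented by $w$ (hence the median representative) has changed. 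The key observation is that sliding agents of $d$ toward $o$ moves $y_d$ monotonically toward $o \ge w$, so $y_d$ never crosses below $w$; combined with the fact (Lemma~\ref{lem:social-cost-P1}) that all other representatives are $\ge w$ and at least $k/2$ equal $w$, the median of the multiset of representatives stays pinned at $w$ throughout. And $o$, the position of the overall median agent, is unchanged because we only move agents that are already on the same side of $o$ as they end up (so the count of agents $\le o$ and $\ge o$ is preserved, or we stop the move exactly at $o$). Hence both invariants hold and the moving lemma applies, forcing every agent of $d$ onto $o$, which in turn forces $y_d = o$.

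The main obstacle I anticipate is handling the boundary/tie cases cleanly: when an agent's move would make $y_d$ coincide with $w$ exactly, or when moving an agent across $o$ would flip which agent is the overall median. I would circumvent the first by noting that if $y_d$ reaches $w$ we are done with that district (its representative is now $w$, not a "third" value), and the second by insisting that each agent is moved only within $[x_i, o]$ or $[o, x_i]$ — never strictly past $o$ — so the median agent's position, and thus $o$, is genuinely invariant. Once every $y_d \notin \{w\}$ has been shown to equal $o$ with all underlying agents on $o$, the statement follows: the only representatives are $w$ and possibly $o$, and the $o$-districts are degenerate. I would close by remarking that the symmetric case $w > o$ is identical with left and right exchanged.
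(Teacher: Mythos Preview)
Your approach is the paper's: take a district $d$ with $y_d \notin \{w,o\}$, relocate all its agents to $o$, observe that the mechanism still outputs $w$ (since at least half the representatives already equal $w$ by Lemma~\ref{lem:social-cost-P1}) and that the overall median, hence $o$, is unchanged, and then invoke Lemma~\ref{lem:SC-moving-lemma} agent by agent to contradict worst-case-ness.

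One caution on your intermediate step. The claim that ``sliding agents of $d$ toward $o$ moves $y_d$ monotonically toward $o$'' is not correct as stated: moving an agent with $x_i>o$ down to $o$ can only \emph{decrease} the truncated average, and if $y_d$ already sits below $o$ this pushes $y_d$ farther from $o$ (and in principle below $w$, depending on the order of moves). You could repair this by fixing the order---first slide up all agents with $x_i<o$ (so $y_d$ weakly increases from its initial value $>w$), then slide down those with $x_i>o$ (so $y_d$ weakly decreases to $o>w$)---which keeps $y_d\ge w$ throughout. But in fact the whole concern is unnecessary: the inequality established in the proof of Lemma~\ref{lem:SC-moving-lemma} compares $\cost(w|\cdot)/\cost(o|\cdot)$ for two \emph{fixed} points $w,o$ and chains across single-agent moves regardless of what the mechanism outputs on the intermediate instances. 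All that matters is that on the \emph{final} instance (every agent of $d$ at $o$) the mechanism outputs $w$ and the optimum is $o$, which is exactly what the paper verifies directly without tracking intermediates.
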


\begin{proof}
Suppose towards a contradiction that the worst-case instance $I$ is such that there exists a district $d$ with representative $y \not\in \{w,o\}$. We move every agent $i \in N_d$ from $x_i$ to $x_i'=o$. Hence, the truncated average of the agents in $d$ changes from $y$ to $o$. By Lemma~\ref{lem:social-cost-P1} and since $w$ is the median representative, we have that at least half of the district representatives coincide with $w$. Consequently, the outcome of the mechanism is not affected when we move the agents of $d$. The optimal location also remains the same as the median agent location does not change. By Lemma~\ref{lem:SC-moving-lemma}, the distortion of the new instance we obtain after moving each agent $i$ (irrespective of whether $x_i < o$ or $x_i > o$) is strictly larger than the distortion of instance $I$, contradicting the fact that it is a worst-case instance. 
\end{proof}

We also argue that it suffices to focus on the case where the worst-case instance $I$ consists of just two districts, which will simplify the last part of our proof. 

\begin{lemma}\label{lem:two-districts}
There exists a worst-case instance with two districts, one represented by $w$ and one represented by $o$.
\end{lemma}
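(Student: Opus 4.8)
The plan is to take a worst-case instance $I$ and ``compress'' it down to two districts without decreasing the distortion. By Lemma~\ref{lem:social-cost-P1} and Lemma~\ref{lem:social-cost-P2}, every district of $I$ is represented by $w$ or by $o$, every agent in an $o$-district is located exactly at $o$, and (since $w$ is the median representative and no representative lies to the left of $w$) at least $k/2$ of the districts are represented by $w$. Denote by $a \ge 1$ the number of $w$-districts $d_1,\dots,d_a$ and by $b = k-a \le a$ the number of $o$-districts. For each $j$ set $A_j = \sum_{i \in N_{d_j}} \delta(x_i,w)$ and $B_j = \sum_{i \in N_{d_j}} \delta(x_i,o)$, and note $B_j > 0$, since otherwise all agents of $d_j$ would lie at $o$ and its representative would be $o$, not $w$. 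Writing $c = \lambda\cdot\delta(w,o) > 0$ and using that $o$ is an optimal location for $I$, the distortion of the mechanism on $I$ equals
\begin{align*}
\frac{\cost(w\mid I)}{\cost(o\mid I)} = \frac{\sum_{j=1}^{a} A_j + b\,c}{\sum_{j=1}^{a} B_j}.
\end{align*}

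Next I would let $d_{j^\star}$ be a $w$-district maximizing $(A_j+c)/B_j$ over $j \in \{1,\dots,a\}$, and form the instance $I'$ with exactly two districts: one an exact copy of $d_{j^\star}$, and one consisting of $\lambda$ agents all positioned at $o$. The representatives in $I'$ are $w$ (the truncated average of $d_{j^\star}$ is unchanged) and $o$, so the mechanism again outputs $w$; and since $\lambda$ of the $2\lambda$ agents of $I'$ sit at $o$, the point $o$ is still an optimal location for $I'$. Hence the distortion of the mechanism on $I'$ equals $(A_{j^\star}+c)/B_{j^\star}$.

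It then remains to chain the elementary inequalities
\begin{align*}
\frac{\sum_{j=1}^{a} A_j + b\,c}{\sum_{j=1}^{a} B_j}
\;\le\; \frac{\sum_{j=1}^{a} (A_j + c)}{\sum_{j=1}^{a} B_j}
\;\le\; \max_{1 \le j \le a} \frac{A_j + c}{B_j}
= \frac{A_{j^\star} + c}{B_{j^\star}},
\end{align*}
where the first step uses $b \le a$ together with $c > 0$ (so $\sum_j A_j + bc \le \sum_j A_j + ac = \sum_j (A_j+c)$), and the second is the standard fact that a ratio of sums of nonnegative terms is at most the largest of the summand-wise ratios (legitimate since each $B_j > 0$). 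Thus the distortion on $I'$ is at least the distortion on $I$; since $I$ is worst-case, the two are equal, so $I'$ is a worst-case instance with two districts, one represented by $w$ and one by $o$. The only step I expect to require any care is checking that $I'$ is a legitimate instance on which the mechanism still outputs $w$ and $o$ is still optimal (a short argument, given that a full district of $I'$ is pinned to $o$); the conceptual point — that keeping a single ``best'' $w$-district suffices — works precisely because at least half of the districts are represented by $w$, which is exactly the inequality $b \le a$ we invoke.
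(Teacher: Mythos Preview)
Your proof is correct and, in fact, more fully fleshed out than the paper's. The paper takes a two-step route: it first argues that in a worst-case instance $|D_w| = |D_o|$ (by appending extra all-at-$o$ districts whenever $|D_w| > |D_o|$ and observing that this strictly raises the distortion), and only then asserts that one may pass to two districts, leaving the actual collapse step largely implicit. You bypass the balancing step entirely: the inequality $b \le a$ is used just once, to pass from $\sum_j A_j + b c$ to $\sum_j (A_j + c)$, after which the mediant/max-ratio inequality singles out the best $w$-district directly. Conceptually both arguments hinge on the same thing (a ratio of sums is dominated by the largest termwise ratio, and there are enough $o$-district ``copies'' available because $b \le a$), but your version makes the reduction to two districts explicit, including the checks that in $I'$ the truncated average of $d_{j^\star}$ is still $w$, that the median of $\{w,o\}$ is $w$, and that $o$ remains optimal because a full half of the agents sit there. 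The paper's route does yield the structural side-fact $|D_w|=|D_o|$, but that fact is never used downstream, so nothing is lost in your shortcut.
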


\begin{proof}
Consider any worst-case instance, and let $D_w$ and $D_o$ denote the sets of districts represented by $w$ and $o$, respectively. We first argue that $|D_w| = |D_o|$. Note that since $w$ is a median among all representatives, we have $|D_w|\geq |D_o|$. Let us assume that $|D_w|>|D_o|$; we will reach a contradiction by creating a new instance, with strictly larger distortion, that has $|D_w| - |D_o|$ additional districts in which all agents are positioned at $o$. Clearly, in this new instance the mechanism again outputs $w$, while the optimal location remains $o$. Since the agents in the newly added districts contribute $0$ to the optimal cost and strictly greater than $0$ to the social cost of $w$, the distortion is strictly larger. 

Now, since $|D_w| = |D_o|$ and all agents in districts represented by $o$ are positioned at $o$ (by Lemma~\ref{lem:social-cost-P2}), we can without loss of generality limit our focus on worst-case instances with just two districts, one with representative $w$ and one with representative $o$. 
\end{proof}

Having shown that it suffices to consider a worst-case instance with one district $d_w$ that is represented by $w$ and one district $d_o$ in which all agents are positioned at $o$, we now argue about the agent positions in $d_w$. Let $\ell$ and $r$ be the locations of the $(\lambda/4+1)$- and $3\lambda/4$-leftmost agent, respectively, in $d_w$ (i.e., the leftmost and rightmost location among agents in $S_{d_w}$). Clearly, it holds that $\ell\leq w \leq r$. We argue that $r\leq o$, and that all agents not in $S_{d_w}$ are either at $\ell$ or at $o$.

\begin{lemma}\label{lem:r leq o}
In district $d_w$, $r \leq o$. 
\end{lemma}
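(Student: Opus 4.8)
The plan is to argue by contradiction. Suppose $r>o$. I will construct an instance $J$ with the same mechanism outcome $w$ and the same optimal location $o$ as $I$, but with strictly larger distortion, contradicting the choice of $I$ as a worst-case instance.

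Start with a structural observation about $S_{d_w}$. By assumption $w<o<r$, the representative $w$ equals the average of the positions $\{x_i : i\in S_{d_w}\}$, and $r=\max_{i\in S_{d_w}}x_i$ is attained by some agent $j\in S_{d_w}$, so $x_j=r>w$. Since these positions average to $w$ but are not all equal to $w$, there is an agent $i^\ast\in S_{d_w}$ with $x_{i^\ast}<w$, and necessarily $i^\ast\neq j$. Now fix a sufficiently small $\varepsilon>0$ and obtain $J$ from $I$ by moving $j$ from $r$ to $r-\varepsilon$ and simultaneously moving $i^\ast$ from $x_{i^\ast}$ to $x_{i^\ast}+\varepsilon$, leaving every other agent in place. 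This is a ``balanced'' move, since the sum of the positions in $S_{d_w}$ does not change. For $\varepsilon$ small enough — in particular, $\varepsilon<\min\{r-o, w-x_{i^\ast}\}$ and $\varepsilon$ below every positive gap between consecutive agent positions in $d_w$ — both $j$ and $i^\ast$ stay among the $(\lambda/4{+}1)$-th through $(3\lambda/4)$-th leftmost agents of $d_w$, so $S_{d_w}$ is the same set of agents, its truncated average is still $w$, the representative of $d_o$ is still $o$, and thus the mechanism still outputs $w$ on $J$; moreover, since $r-\varepsilon>o$ and $x_{i^\ast}+\varepsilon<o$, the numbers of agents lying below, at, and above $o$ are unchanged, so $o$ remains an optimal location in $J$.

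The computation is then immediate. Because $r-\varepsilon\in(o,r)$ lies to the right of both $w$ and $o$, while $x_{i^\ast}+\varepsilon<w<o$ lies to the left of both, each of the four affected distances — from $j$ and from $i^\ast$ to each of $w$ and $o$ — decreases by exactly $\varepsilon$, while every other distance is unchanged; hence $\cost(w\mid J)=\cost(w\mid I)-2\varepsilon$ and $\cost(o\mid J)=\cost(o\mid I)-2\varepsilon$. Since $I$ is a worst-case instance with distortion strictly larger than $1$ we have $\cost(w\mid I)>\cost(o\mid I)$, and $\cost(o\mid I)>0$ because agent $j$ sits at $r>o$. Taking $\varepsilon$ small enough that $\cost(o\mid I)-2\varepsilon\geq 0$, Inequality~\eqref{eq:main-inequality} applied with $\alpha=\cost(w\mid I)-2\varepsilon$, $\beta=\cost(o\mid I)-2\varepsilon$ and $\gamma=2\varepsilon$ gives
\[
\frac{\cost(w\mid I)}{\cost(o\mid I)}=\frac{\alpha+\gamma}{\beta+\gamma}<\frac{\alpha}{\beta}=\frac{\cost(w\mid J)}{\cost(o\mid J)},
\]
so $J$ has strictly larger distortion than $I$, which is the desired contradiction.

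I expect the bookkeeping around the balanced move to be the main obstacle, and it is the reason the claim does not follow from a single application of Lemma~\ref{lem:SC-moving-lemma}: since $w$ is the truncated \emph{average} of $S_{d_w}$, moving just one agent of $S_{d_w}$ toward $o$ would change $w$, violating the hypothesis of that lemma that both instances share the same $w$. Moving $i^\ast$ in the opposite direction restores the average, but one must verify that $\varepsilon$ can be taken small enough to keep $S_{d_w}$ (hence $w$) and the counts of agents on each side of $o$ (hence the optimum) intact; once that is done, the rest is the short cost comparison above.
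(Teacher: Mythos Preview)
Your argument is correct and rests on the same idea as the paper --- perform a balanced displacement inside $S_{d_w}$ that keeps the truncated average at $w$ while bringing agents closer to $o$ --- but you carry it out differently. The paper moves \emph{every} agent of $S_{d_w}$ lying to the right of $o$ down to $o$ in one shot, and compensates by shifting an entire subset $L_<\subseteq L$ of left-of-$w$ agents to the right by exactly the amount needed to restore the average; it then appeals to Lemma~\ref{lem:SC-moving-lemma} for each moved agent. Your version is the infinitesimal counterpart: nudge a single agent at $r$ left by $\varepsilon$ and a single agent at $x_{i^\ast}<w$ right by $\varepsilon$, then compare costs directly via Inequality~\eqref{eq:main-inequality}. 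What your route buys is that it sidesteps the somewhat delicate existence argument for $L_<$; the price is the $\varepsilon$-bookkeeping you already flag. One small point worth tightening: the literal claim that ``$S_{d_w}$ is the same set of agents'' can fail when there are ties at $\ell$ or $r$ (an agent outside $S_{d_w}$ tied at the boundary may swap in), but what you actually need --- that the sum of the middle $\lambda/2$ order statistics is unchanged --- still holds, since neither of your two replacements touches the bottom $\lambda/4$ or top $\lambda/4$ order statistics.
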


\begin{proof}
Suppose towards a contradiction that the worst-case instance $I$ is such that $r > o$ in $d_w$, and thus $\ell < w$. Let $L$ be the set of agents in $S_{d_w}$ that are positioned to the left of or at $w$, and $R$ the set of agents in $S_{d_w}$ that are positioned to the right of $o$. By the definition of $w$, for any set $Q \subseteq L$, we have
\begin{align*}
    w &= \frac{2}{\lambda}\left( \sum_{i \in L} x_i + \sum_{i \in R} x_i + \sum_{i \in S_{d_w} \setminus (L \cup R)} x_i \right) \\
      &= \frac{2}{\lambda}\left( \sum_{i \in L} x_i + \sum_{i \in R} (x_i-o) + \sum_{i \in R} o + \sum_{i \in S_{d_w} \setminus (L \cup R)} x_i \right) \\
      &= \frac{2}{\lambda}\left( \sum_{i \in L\setminus Q} x_i + \sum_{i \in Q}\left( x_i + \frac{1}{|Q|}\sum_{j \in R} (x_j-o) \right) + \sum_{i \in R} o + \sum_{i \in S_{d_w} \setminus (L \cup R)} x_i \right).
\end{align*}
Consequently, there must exist a set $L_< \subseteq L$ such that $x_i + \frac{1}{|L_<|}\sum_{j \in R} (x_j-o) \leq w < o$ for every $i \in L_<$; if no such set exists, then the last expression above would be strictly larger than $w$. We obtain a new instance $J$ by moving all agents in $R$ from $x_i$ to $x_i'=o$ and all agents in $L_<$ from $x_i$ to $x_i' = x_i + \frac{1}{|L_<|}\sum_{j \in R} (x_j-o)$. Clearly, $w$ is still the representative of $d_w$ and $o$ the optimal location. By Lemma~\ref{lem:SC-moving-lemma}, since all agents that moved are closer to $o$ in $J$ that in $I$, $J$ must have distortion strictly larger than $I$, a contradiction. 
\end{proof}

\begin{lemma}\label{lem:1/4 agents}
In district $d_w$, the $\lambda/4$ leftmost agents are positioned at $\ell$ and the $\lambda/4$ rightmost agents are positioned at $o$. 
\end{lemma}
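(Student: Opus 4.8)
Throughout, start from a worst-case instance, where the distortion is maximized and (as the final bound will be $2>1$) strictly larger than $1$, and assume it has the two-district form guaranteed by Lemma~\ref{lem:two-districts}: district $d_w$ represented by $w$, district $d_o$ with all its agents at $o$ (by Lemma~\ref{lem:social-cost-P2}), and $\ell\le w\le r\le o$ (the last inequality by Lemma~\ref{lem:r leq o}). The plan is to prove both claims by the same relocation strategy used in Lemmas~\ref{lem:social-cost-P1}--\ref{lem:r leq o}: if an agent of $d_w$ that lies outside $S_{d_w}$ is not already at the claimed position, move it there, check that neither the winner $w$ nor the optimum $o$ changes, and invoke Lemma~\ref{lem:SC-moving-lemma} to contradict worst-case optimality. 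One observation disposes of the ``$o$ stays optimal'' check once and for all: since all $\lambda$ agents of $d_o$ sit at $o$, among the $2\lambda$ agents there are always at least $\lambda$ at positions $\le o$ and at least $\lambda$ at positions $\ge o$, no matter where the agents of $d_w$ are placed; hence $o$ remains a median, and thus a social-cost optimum, in every instance obtained by relocating agents within $d_w$.

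For the $\lambda/4$ leftmost agents, suppose some such agent $i$ has $x_i<\ell$, and relocate it to $x_i'=\ell$. Since $\ell\le o$ we have $x_i<o$ and $x_i'\in(x_i,o]$, so Lemma~\ref{lem:SC-moving-lemma}(a) will apply once the winner is shown to be unchanged. The only quantity that could change is $y_{d_w}$, the average over the $\lambda/2$ agents ranked $(\lambda/4+1)$-th to $(3\lambda/4)$-th in $d_w$; a short bookkeeping argument shows that the corresponding multiset of positions is invariant, because the relocated agent lands exactly at $\ell$, which already equals the position of the leftmost agent of $S_{d_w}$: the newly added copy of $\ell$ is absorbed into the bottom $\lambda/4$ positions, leaving ranks $\lambda/4+1,\dots,\lambda$ unchanged as a multiset. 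Hence $y_{d_w}=w$, so $w$ is still the winner, and Lemma~\ref{lem:SC-moving-lemma}(a) gives strictly larger distortion -- a contradiction. As the $\lambda/4$ leftmost agents are by definition at positions at most $\ell$, they must all be at $\ell$.

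For the $\lambda/4$ rightmost agents, take any such agent $i$, so $x_i\ge r$. If $x_i<o$, relocate $i$ to $o$: it moves rightward, stays among the $\lambda/4$ rightmost agents, so $S_{d_w}$ and hence $y_{d_w}=w$ are untouched; since $o$ stays optimal, Lemma~\ref{lem:SC-moving-lemma}(a) yields a contradiction. If $x_i>o$, again relocate $i$ to $o$: it moves leftward but stays to the right of every agent of $S_{d_w}$ because $o\ge r$ (in the boundary case $o=r$ the relocated agent's position equals $r$, so the same bookkeeping as in the previous paragraph shows $y_{d_w}$ is unchanged), so $y_{d_w}=w$, $o$ stays optimal, and Lemma~\ref{lem:SC-moving-lemma}(b) yields a contradiction. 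Hence every one of the $\lambda/4$ rightmost agents is at $o$.

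The only genuinely delicate point is the bookkeeping that the truncated multiset $S_{d_w}$ is invariant under these relocations; this is where ties at $\ell$ or $r$ must be treated with care, but in each case the relocated agent is placed exactly at a boundary position of $S_{d_w}$ (namely $\ell$ or $r$) or strictly outside it on the right (at $o\ge r$), so no position contributing to the average $y_{d_w}$ ever changes. Everything else then follows mechanically from Lemma~\ref{lem:SC-moving-lemma} together with the median observation above.
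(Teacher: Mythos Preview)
Your proof is correct and follows essentially the same approach as the paper: move any offending extreme agent of $d_w$ to the claimed target, observe that $w$ and $o$ are unchanged, and apply Lemma~\ref{lem:SC-moving-lemma} for a contradiction. The paper's own argument is much terser---it simply notes that these agents lie outside $S_{d_w}$ and hence do not affect $w$, then invokes Lemma~\ref{lem:SC-moving-lemma}---whereas you spell out the median argument guaranteeing that $o$ stays optimal, the multiset bookkeeping showing $y_{d_w}$ is invariant under boundary relocations, and the split into Lemma~\ref{lem:SC-moving-lemma}(a) versus (b) for rightmost agents on either side of $o$; these are exactly the details the paper elides.
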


\begin{proof}
Assume otherwise and note that all these agents are not in $S_{d_w}$ and, hence, do not affect $w$. By repeatedly applying Lemma~\ref{lem:SC-moving-lemma} and moving each agent $i$ with $x_i<\ell$ to $\ell$ and each agent $i$ with $x_i>r$ at $o$, we reach an instance with strictly larger distortion; a contradiction. 
\end{proof}

We are finally ready to prove the main result of this section. 

\begin{theorem}
The distortion of {\sc Median-of-TruncatedAvg} is at most $2$.
\end{theorem}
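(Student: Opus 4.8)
The plan is as follows. If the distortion of the mechanism equals $1$ there is nothing to prove, so assume it is strictly larger than $1$ and fix a worst-case instance $I$. By Lemmas~\ref{lem:social-cost-P1}--\ref{lem:1/4 agents} I may assume $I$ consists of exactly two districts: a district $d_w$ represented by $w$, and a district $d_o$ all of whose $\lambda$ agents sit at $o$; moreover, inside $d_w$ the $\lambda/4$ leftmost agents are at $\ell$, the $\lambda/4$ rightmost agents are at $o$, and the remaining $\lambda/2$ agents form $S_{d_w}$, lie in $[\ell,r]$ with $\ell\le w\le r\le o$, and have average exactly $w$. Recall the standing assumption $w<o$ (the case $w>o$ is symmetric); note that $\ell=o$ would force $w=r=o$, a contradiction, so $\ell<o$.

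The first step is to write both costs in terms of $\ell,w,o$ and the positions in $S_{d_w}$. The agents of $d_o$ contribute $0$ to $\cost(o|I)$ and $\lambda(o-w)$ to $\cost(w|I)$. Inside $d_w$, the $\lambda/4$ agents at $\ell$ contribute $\frac{\lambda}{4}(o-\ell)$ and $\frac{\lambda}{4}(w-\ell)$; the $\lambda/4$ agents at $o$ contribute $0$ and $\frac{\lambda}{4}(o-w)$; and since the average of $S_{d_w}$ is $w$, the agents of $S_{d_w}$ contribute $\sum_{i\in S_{d_w}}(o-x_i)=\frac{\lambda}{2}(o-w)$ to $\cost(o|I)$ and $\sum_{i\in S_{d_w}}|x_i-w|$ to $\cost(w|I)$.

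The only quantity not pinned down by the structural lemmas — and the crux of the argument — is $\sum_{i\in S_{d_w}}|x_i-w|$, since the agents of $S_{d_w}$ may lie on both sides of $w$. I would bound it as follows. Because the average of $S_{d_w}$ equals $w$, the total leftward deviation equals the total rightward deviation; call this common value $\Delta$, so $\sum_{i\in S_{d_w}}|x_i-w|=2\Delta$. If $s$ agents of $S_{d_w}$ lie strictly left of $w$, then $\Delta\le s(w-\ell)$ (bounding from the left) and $\Delta\le(\lambda/2-s)(r-w)$ (bounding from the right). The maximum over $s$ of the minimum of these two linear expressions is attained where they coincide, which gives $\Delta\le\frac{\lambda}{2}\cdot\frac{(r-w)(w-\ell)}{r-\ell}$ (the degenerate case $\ell=r=w$ gives $\Delta=0$ directly). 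Since $r\mapsto\frac{(r-w)(w-\ell)}{r-\ell}$ is increasing on $[w,o]$, using $r\le o$ yields $\sum_{i\in S_{d_w}}|x_i-w|\le\frac{\lambda(o-w)(w-\ell)}{o-\ell}$.

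Finally I would normalize so that $\ell=0$ and $o=1$ and set $t:=w\in[0,1)$. Collecting the terms above gives $\cost(o|I)=\frac{\lambda}{4}(3-2t)$ and $\cost(w|I)\le\lambda\big[(1-t)+\tfrac{t}{4}+\tfrac{1-t}{4}+t(1-t)\big]=\frac{\lambda}{4}(5-4t^2)$, so the distortion is at most $\frac{5-4t^2}{3-2t}$, which is at most $2$ precisely because $(6-4t)-(5-4t^2)=(2t-1)^2\ge 0$; equality at $t=\tfrac12$ matches the known lower bound of $2$. I expect the cost bookkeeping in the second and last steps to be routine; the genuinely delicate point is the bound on $\sum_{i\in S_{d_w}}|x_i-w|$ obtained from the two one-sided estimates.
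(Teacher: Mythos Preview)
Your proof is correct and follows the same overall route as the paper: invoke the structural Lemmas~\ref{lem:social-cost-P1}--\ref{lem:1/4 agents} to reduce to the two-district configuration, then carry out a one-parameter optimization that terminates in a perfect-square inequality. The difference is only in how the last step is parameterized. The paper fixes the one-sided deviation $\Delta=\sum_{i\in L}(w-x_i)$ and $|L|$, introduces slack variables $\xi_1,\xi_2$ for $o-w$ and $w-\ell$, argues the ratio is maximized at $\xi_1=\xi_2=0$ (handling $R=\varnothing$ separately), and finishes with $(\lambda-4|L|)^2\ge 0$. You instead fix $\ell,w,o$, bound $\sum_{i\in S_{d_w}}|x_i-w|$ from above via the $\max_s\min\{s(w-\ell),(\lambda/2-s)(r-w)\}$ argument and the monotonicity of $r\mapsto\frac{(r-w)(w-\ell)}{r-\ell}$, normalize, and finish with $(2t-1)^2\ge 0$. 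These are dual parameterizations of the same extremal analysis---indeed, under the paper's tight case $|L|=\tfrac{\lambda(1-t)}{2}$ the two squares coincide up to a factor of $\lambda^2$---and your version has the minor advantage of avoiding the separate $R=\varnothing$ case and the slack-variable bookkeeping.
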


\begin{proof}
By Lemmas~\ref{lem:social-cost-P2}, \ref{lem:two-districts} and \ref{lem:1/4 agents}, we have that the $2\lambda$ agents in the worst-case  instance $I$ are distributed on the line as follows: $\lambda/4$ agents are positioned at $\ell$, $5\lambda/4$ agents are positioned at $o$ ($\lambda$ agents from $d_o$ and $\lambda/4$ agents from $d_w$), and $\lambda/2$ agents are positioned in $[\ell,r]$. 
We partition the $\lambda/2$ agents in $S_{d_w}$ into two sets:
$L = \{i \in S_{d_w}: x_i \leq w\}$ and $R = \{i \in S_{d_w}: x_i > w\}$.
Since $r\leq o$ (due to Lemma~\ref{lem:r leq o}) and $w = \sum_{i \in L \cup R} x_i$ (by definition), the optimal cost is
\begin{align}\label{eq:optcost}
    \cost(o|I)&=\frac{\lambda}{4}(o-\ell)+\sum_{i \in L}(o-x_{i})+\sum_{i \in R}(o-x_{i})\nonumber\\
    &=\frac{\lambda}{4}(o-\ell)+\frac{\lambda}{2}(o-w)\nonumber\\
    &=\frac{\lambda}{4}(w-\ell)+\frac{3\lambda}{4}(o-w).
\end{align}
Similarly, the cost of the mechanism is
\begin{align}\label{eq:mechcost}
    \cost(w|I)=\frac{\lambda}{4}(w-\ell)+\sum_{i \in L}(w-x_{i})+\sum_{i \in R}(x_{i}-w)+\frac{5\lambda}{4}(o-w).
\end{align}

By the definition of $w$, $\sum_{i \in L}{(w-x_i)} = \sum_{i\in R}{(x_i-w)}$. Also, again by definition, $|L|\geq 1$. 
If $R = \varnothing$, it must be the case that $\ell = w = r$, and the distortion is at most $5/3$ as Equations~\eqref{eq:optcost} and~\eqref{eq:mechcost} are simplified to $\cost(o|I) = 3(o-w)/4$ and $\cost(w|I) = 5(o-w)/4$, respectively. 
Hence, in the rest of the proof we will assume that $|R| \geq 1$. 

Since $x_i\leq o$ for each agent $i \in R$ and $|L| + |R| = \lambda/2$, we have 
$$\sum_{i \in L}{(w-x_i)} = \sum_{i \in R}{(x_i-w)} \leq |R|(o-w) \Leftrightarrow 
o-w \geq \frac{\sum_{i \in L}{(w-x_i)}}{\lambda/2 - |L|}.$$ 
Similarly, as $x_i\geq \ell$ for each agent $i \in L$, we obtain 
$$\sum_{i \in L}{(w-x_i)} \leq |L|(w-\ell) \Leftrightarrow
w-\ell \geq \frac{\sum_{i \in L}{(w-x_i)}}{|L|}.$$
Let $o-w=\frac{\sum_{i \in L}(w-x_{i})}{\frac{\lambda}{2}-|L|}+\xi_1$ and $w-\ell = \frac{\sum_{i \in L}{(w-x_i)}}{|L|}+\xi_2$, where $\xi_1, \xi_2 \geq 0$. Therefore, Equations \eqref{eq:optcost} and \eqref{eq:mechcost} can be rewritten as
\begin{align*}
    \cost(o|I)&=\frac{\lambda}{4}\left(\frac{\sum_{i \in L}{(w-x_i)}}{|L|}+\xi_2\right)+\frac{3\lambda}{4}\left(\frac{\sum_{i \in L}(w-x_{i})}{\frac{\lambda}{2}-|L|}+\xi_1\right)\\\label{eq:case3b}
    \cost(w|I)&=\frac{\lambda}{4}\left(\frac{\sum_{i \in L}{(w-x_i)}}{|L|}+\xi_2\right)+\frac{5\lambda}{4}\left(\frac{\sum_{i \in L}(w-x_{i})}{\frac{\lambda}{2}-|L|}+\xi_1\right)+2\sum_{i \in L}(w-x_{i}).
\end{align*}
It is not hard to see that, unless the distortion is at most $5/3$ and the claim holds trivially, the ratio is maximized when $\xi_1=\xi_2 = 0$. We can then obtain the following upper bound on the distortion.
\begin{align*}
    \frac{\cost(w|I)}{\cost(o|I)} &\leq \frac{\frac{\lambda}{4|L|}+\frac{5\lambda}{2\lambda-4|L|}+2}{\frac{\lambda}{4|L|}+\frac{3\lambda}{2\lambda-4|L|}}\\
    &\leq 2,
\end{align*}
where the last inequality follows since $\frac{\lambda}{4|L|}+\frac{5\lambda}{2\lambda-4|L|}+2 \leq 2\left(\frac{\lambda}{4|L|}+\frac{3\lambda}{2\lambda-4|L|}\right) \Leftrightarrow (\lambda-4|L|)^{2}\geq 0$.
This concludes the proof. 
\end{proof}


\section{Max cost} \label{sec:max-cost}
We now consider the Max cost objective, for which we show a tight bound of $2$ for both unrestricted and strategyproof mechanisms. We begin with the lower bound.

\begin{theorem}
For Max cost, the distortion of any mechanism (unrestricted or strategyproof) is at least $2$.
\end{theorem}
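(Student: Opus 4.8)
The plan is to exploit the distributed structure of the mechanism together with cardinal unanimity. First, if the mechanism $M$ has unbounded distortion there is nothing to prove, so I would start by assuming that $M$ has finite distortion; under this assumption Lemma~\ref{lem:unanimous} applies and $M$ must be cardinally-unanimous, which is the only property of $M$ that the rest of the argument will use. This reduction is worth isolating up front, because without it we would not be entitled to pin down the district representatives.

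Next I would exhibit a single hard instance and let the worst-case nature of the distortion do the rest. Take $k=2$ symmetric districts, each with $\lambda$ agents: place all agents of district $d_1$ at $0$ and all agents of district $d_2$ at $1$ (the case $\lambda=1$, i.e. two agents, already suffices). By cardinal unanimity the mechanism is forced to choose $y_{d_1}=0$ as the representative of $d_1$ and $y_{d_2}=1$ as the representative of $d_2$, since in each district all reported positions coincide; hence $M(I)\in\{0,1\}$. For the Max cost objective every agent's position lies in $\{0,1\}$, so $\cost(0\mid I)=\cost(1\mid I)=1$, whereas locating the facility at $1/2$ yields $\cost(1/2\mid I)=1/2$. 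Therefore the distortion is at least $1/(1/2)=2$, and this holds simultaneously for unrestricted and for strategyproof mechanisms — strategyproofness is not even invoked, the obstruction is purely informational.

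There is no real technical obstacle here; the only care needed is in the logic of the quantifiers. The distortion is a supremum over instances, so producing this one two-district family is enough, and the step that actually does the work is the appeal to Lemma~\ref{lem:unanimous} to argue that the two representatives are forced to be $0$ and $1$ — which is exactly why the (trivially optimal in the centralized setting) midpoint $1/2$ is unreachable by any distributed mechanism on this instance.
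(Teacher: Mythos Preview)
Your proof is correct and follows essentially the same approach as the paper: a two-district instance with all agents of one district at one point and all agents of the other district at another, invoking cardinal unanimity (Lemma~\ref{lem:unanimous}) to force the representatives, and then comparing with the midpoint. The only cosmetic difference is that the paper places the districts at $-1$ and $1$ (with optimum $0$) rather than at $0$ and $1$ (with optimum $1/2$).
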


\begin{proof}
Consider any mechanism and the following instance $I$ with two districts. The agents in the first district are all positioned at $-1$, while the agents in the second district are all positioned at $1$. Due to unanimity (Lemma~\ref{lem:unanimous}), the representatives of the two districts must be $-1$ and $1$, respectively. Hence, the winner is either $-1$ or $1$. However, $\cost(-1|I) = \cost(1|I) = 2$, whereas $\cost(0|I) = 1$, leading to a distortion of $2$. 
\end{proof}

For the upper bound, we consider the {\sc Arbitrary} mechanism, which chooses the representative of each district to be the position of any agent therein, and then chooses any representative as the final winner. See Mechanism~\ref{mech:arbitrary} for a specific implementation of this mechanism using the position of the leftmost agent from each district as the district representative, and then the leftmost representative as the final winner. Clearly, {\sc Arbitrary} is equivalent to some $p$-Statistic-of-$q$-Statistic mechanism depending on the choices within and over districts; for example, the particular implementation of {\sc Arbitrary} as Mechanism~\ref{mech:arbitrary} is equivalent to $1$-Statistic-of-$1$-Statistic. We will now show that the distortion of this mechanism is at most $2$.

\SetCommentSty{mycommfont}
\begin{algorithm}[h]
\SetNoFillComment
\caption{\sc Arbitrary (Leftmost-of-Leftmost)}
\label{mech:arbitrary}
\For{each district $d$}
{
    $y_d := \min_{i \in N_d}\{x_i\}$\;
}
\Return $w := \min_d \{y_d\}$\; 
\end{algorithm}

\begin{theorem}
For Max cost, the distortion of {\sc Arbitrary} is at most $2$.
\end{theorem}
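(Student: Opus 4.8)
The plan is to compare the max cost of the winner $w$ chosen by {\sc Arbitrary} against the optimal max cost. Let $I$ be any instance, let $w$ be the location returned by the mechanism (some agent's position), and let $o$ be an optimal location for the Max cost objective, with $\cost(o|I) = \max_{i\in N}\delta(x_i,o) =: r^\*$. The key structural fact to exploit is that $w$ is itself the position of some agent, say agent $a$; hence $\delta(x_a, o) \le r^\*$, i.e. $w = x_a$ lies within distance $r^\*$ of $o$. Then for every agent $i$, the triangle inequality gives $\delta(x_i, w) \le \delta(x_i, o) + \delta(o, w) \le r^\* + r^\* = 2r^\*$. Taking the max over $i$ yields $\cost(w|I) \le 2\,\cost(o|I)$, which is exactly the claim.

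First I would note that, by Lemma~\ref{lem:unanimous} (cardinal unanimity) together with the description of {\sc Arbitrary}, the output $w$ is always the position of some agent of some district — this is immediate since each representative $y_d$ is by construction an agent position in $N_d$, and the winner is one of the $y_d$'s. Second, I would introduce the optimal max cost $r^\*$ and observe that $w$, being an agent position, satisfies $|w - o| \le r^\*$. Third, I would apply the triangle inequality agent by agent to bound $\cost(w|I)$ by $2r^\*$. Finally I would conclude by the definition of distortion. One should also dispatch the degenerate case $r^\* = 0$ separately: then all agents coincide at a single point, so by unanimity every representative — and hence $w$ — equals that point, giving $\cost(w|I)=0$ and ratio $1 \le 2$.

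There is essentially no hard obstacle here: the argument is a one-line triangle-inequality bound once one observes that {\sc Arbitrary} always outputs an agent's position. The only mild subtlety worth stating cleanly is why $w$ must be an agent position (it follows directly from the mechanism's definition rather than needing Lemma~\ref{lem:unanimous}), and handling the all-agents-coincide case so that the distortion ratio is well defined. Together with the matching lower bound of $2$ proved just above, this settles the Max cost distortion at exactly $2$ for both unrestricted and strategyproof mechanisms, the latter because the specific implementation in Mechanism~\ref{mech:arbitrary} is a $1$-Statistic-of-$1$-Statistic mechanism and hence strategyproof by Lemma~\ref{lem:statistic-sp}.
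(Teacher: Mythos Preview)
Your proof is correct and takes essentially the same approach as the paper: both arguments hinge on the fact that {\sc Arbitrary} outputs some agent's position, and then bound the Max cost of any agent position by twice the optimal Max cost. The paper phrases this via the diameter (the optimal cost is $(r-\ell)/2$ where $\ell,r$ are the extreme agent positions, and any agent position has Max cost at most $r-\ell$), while you phrase it via the triangle inequality through $o$; these are the same one-line computation.
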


\begin{proof}
Given any instance $I$, let $\ell$ and $r$ denote the positions of the leftmost and the rightmost agent, respectively. Clearly, the optimal location is $o = \frac{r-\ell}{2}$, and thus $\cost(o|I) = \frac{r-\ell}{2}$. On the other hand, the {\sc Arbitrary} mechanism will necessarily return the location of some agent as the winner $w$, and hence $\cost(w|I) \leq r-\ell$; the claim follows. 
\end{proof}

 
\section{Sum-of-Max} \label{sec:sum-of-max}
Here, we focus on the Sum-of-Max objective; recall that for this objective we sum over each district the maximum agent cost therein. For unrestricted mechanisms, we show that, surprisingly, it is possible to achieve a distortion of $1$, whereas, for strategyproof mechanisms, we show a tight bound of $1+\sqrt{2}$. 

\subsection{Unrestricted mechanisms} \label{sec:sum-of-max:unrestricted}
We will show that the {\sc Median-of-Midpoints} mechanism optimizes the Sum-of-Max objective. This mechanism chooses the representative of each district to be the midpoint of the interval defined by the positions of the agents therein, and then chooses the median representative (breaking ties in favor of the leftmost median in case there are two) as the final winner. See Mechanism~\ref{mech:med-of-midpoints} for a detailed description. 

\SetCommentSty{mycommfont}
\begin{algorithm}[h]
\SetNoFillComment
\caption{\sc Median-of-Midpoints}
\label{mech:med-of-midpoints}
\For{each district $d$}
{
    $y_d := \frac{1}{2} \cdot \bigg( \max_{i \in N_d}x_i + \min_{i \in N_d}x_i \bigg)$ \;
}
\Return $w := \text{Median}_{d \in D}\{y_d\}$ \; 
\end{algorithm}

To show the desired bound on the distortion of {\sc Median-of-Midpoints} for Sum-of-Max, we again show some useful properties of worst-case instances. Without loss of generality, we will assume that the mechanism is applied on input a worst-case instance $I$ where the chosen winner $w$ is to the left of the optimal location $o$, that is, $w < o$.  

\begin{lemma}\label{SoM:no-left-midpoints}
There are no midpoints at the left of $w$ or at the right of $o$.
\end{lemma}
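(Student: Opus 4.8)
The plan is to reduce Sum-of-Max to a $1$-median problem over the district midpoints and then carry out a ``moving-agents'' argument in the spirit of Lemma~\ref{lem:SC-moving-lemma}. Write $a_d$ and $b_d$ for the positions of the leftmost and rightmost agent of district $d$, so that $y_d=(a_d+b_d)/2$. Since $|x-z|$ is convex in $x$, the maximum individual cost in $d$ is attained at $a_d$ or $b_d$, and
\[
\max_{i\in N_d}\delta(x_i,z)=\max\bigl(|z-a_d|,|z-b_d|\bigr)=\frac{b_d-a_d}{2}+|z-y_d|.
\]
Summing over districts gives $\cost(z|I)=C_I+\sum_{d\in D}|z-y_d|$, where $C_I:=\sum_{d\in D}\tfrac{b_d-a_d}{2}$ does not depend on $z$. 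Hence the optimal location $o$ is a minimizer of $\sum_{d}|z-y_d|$ (equivalently $\#\{d:y_d\le o\}\ge k/2$ and $\#\{d:y_d\ge o\}\ge k/2$), and the winner $w$ — being the leftmost median of $\{y_d\}_{d\in D}$ — is also such a minimizer. I would then assume, towards a contradiction, that $I$ is a worst-case instance whose distortion is strictly larger than $1$, so that $\cost(w|I)>\cost(o|I)>0$.

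Suppose first that some district $d$ has $y_d<w$. I would translate every agent of $d$ rightward by $s:=w-y_d>0$. This keeps the width $b_d-a_d$ and hence $C_I$, keeps all other midpoints fixed, and moves $y_d$ to exactly $w$. A short order-statistics check shows the mechanism still outputs $w$ (only a value strictly below $w$ was raised to $w$, so the leftmost-median position is unchanged) and $o$ remains a minimizer of $\sum_d|z-y_d|$, hence optimal for the new instance $J$ (a value $\le o$ was moved to another value $\le o$). Now the contribution of $d$ to $\cost(w|\cdot)$ decreases by exactly $s$ (from $\tfrac{b_d-a_d}{2}+s$ to $\tfrac{b_d-a_d}{2}$) and its contribution to $\cost(o|\cdot)$ also decreases by exactly $s$ (from $\tfrac{b_d-a_d}{2}+(o-y_d)$ to $\tfrac{b_d-a_d}{2}+(o-w)$), while every other term is unaffected, so $\cost(w|J)=\cost(w|I)-s$ and $\cost(o|J)=\cost(o|I)-s$. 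Since $\cost(w|I)>\cost(o|I)$, applying Inequality~\eqref{eq:main-inequality} with $\alpha=\cost(w|J)$, $\beta=\cost(o|J)$, $\gamma=s$ shows $J$ has strictly larger distortion than $I$ (and if $\cost(o|J)=0$ then $J$ has unbounded distortion), contradicting the choice of $I$. The ``no midpoints to the right of $o$'' claim is symmetric: if $y_d>o$, translate every agent of $d$ leftward by $s:=y_d-o>0$ so that $y_d$ lands on $o$; the same bookkeeping gives $M(J)=w$, $o$ still optimal, and both $\cost(w|\cdot)$ and $\cost(o|\cdot)$ drop by exactly $s$, again a contradiction.

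The cost identity and the arithmetic of the drops are routine; the step that genuinely needs care is verifying that the translation perturbs neither the mechanism's output nor the optimal location — i.e., that $w$ stays the leftmost median and $o$ stays a median of the modified midpoint multiset. This relies on $w<o$ (so the moved midpoint stays on the correct side of both $w$ and $o$) together with the leftmost-median tie-break, and it is exactly where the argument would break if one overshot past $w$ or past $o$, which is also why the lemma is phrased as two one-sided bounds rather than a single stronger one.
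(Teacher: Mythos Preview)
Your proof is correct, but it goes through a genuinely different and in fact stronger observation than the paper's. The paper does not use your cost identity
\[
\cost(z|I)=\sum_{d\in D}\frac{b_d-a_d}{2}+\sum_{d\in D}|z-y_d|;
\]
instead, for a district $d$ with midpoint $y<w$ it explicitly identifies the farthest agent (the leftmost agent $\ell$, using $y<w<o$ to argue that $\delta(x_\ell,w)$ and $\delta(x_\ell,o)$ realize the maxima), collapses \emph{all} agents of $d$ to the point $w$, and applies Inequality~\eqref{eq:main-inequality} with $\gamma=\delta(x_\ell,w)$. It then hedges by allowing the optimal location to move to some $o'\neq o$ in the new instance, whereas your identity lets you verify directly that $o$ stays a median of the perturbed midpoint multiset, so no hedge is needed. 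Your translation (keeping the district width) and the paper's collapse (shrinking it to zero) both subtract the same quantity from numerator and denominator, so the inequality step is identical in spirit.

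What your identity buys is more than the lemma: once you have $\cost(z|I)=C_I+\sum_d|z-y_d|$, the winner $w$ (the leftmost median of $\{y_d\}$) is already a minimizer of $\cost(\cdot|I)$, so the distortion is exactly $1$. In other words, you have proved Theorem~\ref{SoM:2-bound} in one line, and the entire worst-case-instance machinery (Lemmas~\ref{SoM:no-left-midpoints} and~\ref{SoM:w-districts} and the final computation) becomes unnecessary. You seem to have noticed this (``$w$\dots is also such a minimizer'') but then proceed with the moving argument anyway; it is worth flagging that the assumption ``distortion strictly larger than $1$'' is already contradicted at that point, so the remainder is correct but vacuous.
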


\begin{proof}
Due to symmetry, it suffices to prove the first part of the lemma. Suppose towards a contradiction that the worst-case instance $I$ is such that there is a district $d$ with representative (midpoint) $y < w$. So, the leftmost agent $\ell$ in $d$ is positioned at $x_\ell < w$ such that $\max_{i \in N_d} \delta(x_i,w) = \delta(x_\ell,w)$ and $\max_{i \in N_d} \delta(x_i,o) = \delta(x_\ell,o)$. We obtain a new instance $J$ by moving each agent $i \in N_d$ to $x_i' = w$. Hence, the midpoint of $d$ becomes $w$ in $J$. Clearly, the outcome of the mechanism remains the same in the new instance.
We can write the distortion of the mechanism for $I$ as follows:
\begin{align*}
\frac{\cost(w|I)}{\cost(o|I)}
&= \frac{ \sum_{d' \neq d} \max_{i \in N_{d'}} \delta(x_i,w) + \delta(x_\ell,w) }{ \sum_{d' \neq d} \max_{i \in N_{d'}} \delta(x_i,o) + \delta(x_\ell,o)} \\
&= \frac{ \sum_{d' \neq d} \max_{i \in N_{d'}} \delta(x_i,w) + \delta(x_\ell,w) }{ \sum_{d' \neq d} \max_{i \in N_{d'}} \delta(x_i,o) + \delta(x_\ell,w) + \delta(w,o)} \\
&< 
\frac{ \sum_{d' \neq d} \max_{i \in N_{d'}} \delta(x_i,w)  }{ \sum_{d' \neq d} \max_{i \in N_{d'}} \delta(x_i,o) + \delta(w,o)} \\
&=\frac{\cost(w|J)}{\cost(o|J)},
\end{align*}
where the inequality follows by Inequality~\eqref{eq:main-inequality}. The optimal location in $J$ might be a different position $o' \neq o$. Since $\cost(o') \leq \cost(o)$, we have that
\begin{align*}
\frac{\cost(w|I)}{\cost(o|I)} 
< \frac{\cost(w|J)}{\cost(o|J)}
\leq \frac{\cost(w|J)}{\cost(o'|J)}.
\end{align*}
This contradicts the fact that the original instance $I$ is a worst-case instance.
\end{proof}

\begin{lemma}\label{SoM:w-districts}
In each district represented by $w$, all agents are positioned at $w$. 
\end{lemma}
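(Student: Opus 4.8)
The plan is to mimic the structure of the previous lemma (\Cref{SoM:no-left-midpoints}): assume we start from a worst-case instance $I$ in which some district $d$ is represented by $w$ but not all of its agents are at $w$, and then construct a modified instance $J$ (obtained by relocating the ``extreme'' agents of $d$ to $w$) whose distortion is strictly larger, contradicting worst-case optimality. So first I would fix such a district $d$: since its midpoint is $w$, its leftmost agent $\ell_d$ and rightmost agent $r_d$ satisfy $\frac{1}{2}(x_{\ell_d} + x_{r_d}) = w$, and not both are at $w$, so $x_{\ell_d} < w < x_{r_d}$. The quantity contributed by $d$ to $\cost(w|I)$ is $\max\{w - x_{\ell_d},\, x_{r_d} - w\} = w - x_{\ell_d} = x_{r_d} - w$ (the two are equal because $w$ is the midpoint), while the contribution of $d$ to $\cost(o|I)$ is $\max\{o - x_{\ell_d},\, x_{r_d} - o\}$; since $w < o$, this maximum is $o - x_{\ell_d}$, i.e., $(w - x_{\ell_d}) + (o - w)$.

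Next I would form $J$ by moving every agent of $d$ to $w$, so that $d$'s midpoint stays $w$ and the mechanism's winner is unchanged (nothing outside $d$ moved, and $d$'s representative is still $w$). Writing $A = \sum_{d' \neq d} \max_{i \in N_{d'}} \delta(x_i, w)$ and $B = \sum_{d' \neq d} \max_{i \in N_{d'}} \delta(x_i, o)$ for the contributions of the other districts (unchanged between $I$ and $J$), and letting $\mu = w - x_{\ell_d} > 0$ denote $d$'s max-cost for $w$, we have
\begin{align*}
\frac{\cost(w|I)}{\cost(o|I)} = \frac{A + \mu}{B + \mu + (o-w)} < \frac{A}{B + (o-w)} = \frac{\cost(w|J)}{\cost(o|J)},
\end{align*}
where the strict inequality is \Cref{eq:main-inequality} applied with $\alpha = A$, $\beta = B + (o-w)$, $\gamma = \mu$ (valid because the original distortion strictly exceeds $1$, which forces $\alpha > \beta \geq 0$, and $\gamma = \mu > 0$). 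As in the proof of \Cref{SoM:no-left-midpoints}, the optimal location for $J$ might be some $o' \neq o$ with $\cost(o'|J) \le \cost(o|J)$, so the true distortion of $J$ is at least $\cost(w|J)/\cost(o|J)$, hence strictly larger than that of $I$ — contradicting that $I$ is worst-case.

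The one subtlety I would be careful about is the claim that $\max_{i \in N_d}\delta(x_i,o) = o - x_{\ell_d}$ rather than $x_{r_d} - o$: this needs $o - x_{\ell_d} \geq x_{r_d} - o$, equivalently $o \geq \frac{1}{2}(x_{\ell_d}+x_{r_d}) = w$, which holds by our standing assumption $w < o$. (If instead $o$ fell strictly inside $[x_{\ell_d}, x_{r_d}]$ with $x_{r_d}-o$ the larger term, the inequality would go the wrong way — but $w\le o$ rules this out.) Another minor point is that $\cost(w|I) > 0$ is needed so that $\alpha > \beta$; this follows because the distortion of a worst-case instance is strictly larger than $1$, so in particular the numerator is positive. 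I expect no real obstacle beyond bookkeeping: the argument is essentially the same ``push the extreme agents onto $w$'' move already used for midpoints to the left of $w$, now applied to a district whose midpoint already equals $w$.
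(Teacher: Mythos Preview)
Your proposal is correct and follows essentially the same approach as the paper's own proof: both assume a district $d$ represented by $w$ with some agent off $w$, move all agents of $d$ to $w$ to obtain $J$, and apply Inequality~\eqref{eq:main-inequality} (with $\gamma = \delta(x_{\ell_d},w)$) to show the distortion strictly increases, finishing with the observation that the optimum of $J$ can only be at least as good as $o$. Your write-up is in fact slightly more explicit than the paper's about why $\max_{i\in N_d}\delta(x_i,o) = o - x_{\ell_d}$ and why $\alpha > \beta$ holds.
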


\begin{proof}
The proof is similar to that of Lemma~\ref{SoM:no-left-midpoints}. Suppose towards a contradiction that in the worst-case instance $I$ there is a district $d$ represented by $w$ with at least one agent $i$ positioned at some $x_i \neq w$. Then, since $w$ is the midpoint of the positions of the agents in $d$, the leftmost agent $\ell$ must be positioned at $x_\ell < w$, and thus  
$\max_{i \in N_d} \delta(x_i,w) = \delta(x_\ell,w)$ and $\max_{i \in N_d} \delta(x_i,o) = \delta(x_\ell,o)$. We obtain a new instance $J$ by moving each agent $i \in N_d$ from $x_i$ to $x_i' = w$. Clearly, the outcome of the mechanism remains the same in $J$. We can write the distortion of the mechanism for $I$ as follows:
\begin{align*}
\frac{\cost(w|I)}{\cost(o|I)}
&= \frac{ \sum_{d' \neq d} \max_{i \in N_{d'}} \delta(x_i,w) + \delta(x_\ell,w) }{ \sum_{d' \neq d} \max_{i \in N_{d'}} \delta(x_i,o) + \delta(x_\ell,o)} \\
&= \frac{ \sum_{d' \neq d} \max_{i \in N_{d'}} \delta(x_i,w) + \delta(x_\ell,w) }{ \sum_{d' \neq d} \max_{i \in N_{d'}} \delta(x_i,o) + \delta(x_\ell,w) + \delta(w,o) } \\
&< 
\frac{ \sum_{d' \neq d} \max_{i \in N_{d'}} \delta(x_i,w)  }{ \sum_{d' \neq d} \max_{i \in N_{d'}} \delta(x_i,o) + \delta(w,o)} \\
&=\frac{\cost(w|J)}{\cost(o|J)},
\end{align*}
where the inequality follows by Inequality~\eqref{eq:main-inequality}. The optimal location in $J$ might be a different position $o' \neq o$. Since $\cost(o') \leq \cost(o)$, we have that
\begin{align*}
\frac{\cost(w|I)}{\cost(o|I)} 
< \frac{\cost(w|J)}{\cost(o|J)}
\leq \frac{\cost(w|J)}{\cost(o'|J)}.
\end{align*}
This contradicts the fact that the original instance $I$ is a worst-case instance.
\end{proof}

We are now ready to show that {\sc Median-of-Midpoints} optimizes the Sum-of-Max objective.

\begin{theorem}\label{SoM:2-bound}
For Sum-of-Max, the distortion of {\sc Median-of-Midpoints} is $1$.
\end{theorem}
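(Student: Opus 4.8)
The plan is to show that {\sc Median-of-Midpoints} always outputs a minimizer of the Sum-of-Max objective, which immediately gives distortion exactly $1$ (it is at least $1$ by definition of distortion). The single idea that makes this work is a decomposition of the per-district max cost. If the agents of district $d$ occupy the interval $[a_d,b_d]$, with midpoint $m_d=(a_d+b_d)/2$ and half-width $h_d=(b_d-a_d)/2$, then for every location $z$,
\[
  \max_{i\in N_d}\delta(x_i,z)=\max\{|z-a_d|,|z-b_d|\}=|z-m_d|+h_d.
\]
I would verify this identity by case analysis on the position of $z$ relative to $a_d,m_d,b_d$ (the map $z\mapsto\max\{|z-a_d|,|z-b_d|\}$ has slope $-1$ for $z<m_d$ and slope $+1$ for $z>m_d$, with value $h_d$ at $m_d$). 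Summing over districts gives $\cost(z|I)=\sum_{d\in D}|z-m_d|+\sum_{d\in D}h_d$, and the last term does not depend on $z$. Hence minimizing Sum-of-Max is exactly minimizing $\sum_{d}|z-m_d|$, i.e.\ computing a $1$-median of the multiset of midpoints; since any median of a finite multiset of reals minimizes the sum of absolute deviations to it, and the mechanism returns $\text{Median}_{d\in D}\{y_d\}$ with $y_d=m_d$ (ties broken to the leftmost, which is still a minimizer), its output is optimal.

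Since the preceding Lemmas~\ref{SoM:no-left-midpoints} and~\ref{SoM:w-districts} already pin down the structure of worst-case instances, I would also be happy to finish through them, which avoids invoking the median fact explicitly. Take a worst-case instance $I$ with $w<o$. By Lemma~\ref{SoM:no-left-midpoints} every midpoint $m_d$ lies in $[w,o]$, and because $w$ is the (leftmost) median of the midpoints while none lies strictly to its left, the set $D_w$ of districts with midpoint exactly $w$ satisfies $|D_w|\ge|D\setminus D_w|$. Using the identity above together with $m_d\in[w,o]$ (and $m_d=w$ for $d\in D_w$, so the $h_d$ terms cancel),
\[
  \cost(w|I)-\cost(o|I)=\sum_{d\notin D_w}\bigl(2m_d-w-o\bigr)-|D_w|\,(o-w)\le|D\setminus D_w|\,(o-w)-|D_w|\,(o-w)\le 0,
\]
where the first inequality uses $2m_d-w-o\le o-w$ (as $m_d\le o$) and the second uses $|D\setminus D_w|\le|D_w|$. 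Thus no instance has distortion above $1$, and with the trivial lower bound this settles it.

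I do not expect a genuine obstacle here: once the decomposition $\max\{|z-a_d|,|z-b_d|\}=|z-m_d|+h_d$ is in hand, both routes are a couple of lines. The only points needing mild care are the tie-breaking bookkeeping — checking that the "leftmost median" is indeed one of the minimizers of $\sum_d|z-m_d|$ (true, since for an even number of points every point between the two middle ones, endpoints included, is a minimizer) — and, in the lemma-based route, the counting observation that a (leftmost) median of the midpoints forces at least half the districts to have midpoint $\le w$, hence exactly $w$ by Lemma~\ref{SoM:no-left-midpoints}.
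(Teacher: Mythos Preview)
Your proposal is correct, and the first route you give is genuinely simpler and more conceptual than the paper's proof. The paper never writes down the decomposition $\max\{|z-a_d|,|z-b_d|\}=|z-m_d|+h_d$; instead it proceeds via the two structural lemmas on worst-case instances (no midpoints outside $[w,o]$, and districts represented by $w$ have all agents at $w$), then for each $d\notin D_w$ it argues by direct case analysis that the contribution to $\cost(o)$ is $\delta(\ell_d,o)$ and to $\cost(w)$ is $\delta(r_d,w)$, and finally shows $\delta(\ell_d,o)\ge\delta(r_d,w)-\delta(w,o)$ to conclude $\cost(o)\ge\cost(w)$. Your identity collapses all of this: once $\cost(z|I)=\sum_d|z-m_d|+\text{const}$, the theorem is the standard fact that a median minimizes the sum of absolute deviations, and the lemmas become unnecessary. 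Your second route is essentially a compressed version of the paper's argument, with the identity replacing the ad-hoc contribution computations; it also works. The only caveat is that the winner must be one of the $y_d$'s, but since the (leftmost) median of finitely many reals is always one of them, this constraint is automatically satisfied.
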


\begin{proof}
We will show that $\cost(w|I) \leq \cost(o|I)$; to simplify our notation, we drop $I$ from $\cost$ for the rest of this proof. Let $D_w$ be the set of districts represented by $w$, and denote by $\overline{D_w}$ the remaining districts. By Lemma~\ref{SoM:no-left-midpoints} and the fact that $w$ is the median midpoint, we have that $|D_w|\geq |\overline{D_w}|$. Also, by Lemma~\ref{SoM:w-districts}, we know that all agents in the districts of $D_w$ are positioned at $w$. Hence, 
\begin{align*}
\cost(w) &=  \sum_{d \in D_w} \max_{i \in N_d} \delta(x_i,w) + \sum_{d \not\in D_w} \max_{i \in N_d} \delta(x_i,w) \\
&= \sum_{d \not\in D_w} \max_{i \in N_d} \delta(x_i,w).
\end{align*}
We can write the cost of $o$ as follows:
\begin{align*}
\cost(o) &= \sum_{d \in D_w} \max_{i \in N_d} \delta(x_i,o) + \sum_{d \not\in D_w} \max_{i \in N_d} \delta(x_i,o) \\
&= |D_w| \delta(w,o) + \sum_{d \not\in D_w} \max_{i \in N_d} \delta(x_i,o).
\end{align*}
Now, let $d \not\in D_w$ be a district with midpoint $y$, which is such that $y \in (w,o]$ due to Lemma~\ref{SoM:no-left-midpoints}. Let $\ell_d$ and $r_d$ be the leftmost and rightmost agents of district $d$, respectively. We claim that the contribution of $d$ to $\cost(o)$ is the distance $\delta(\ell_d,o)$, and the contribution of $d$ to $\cost(w)$ is the distance $\delta(r_d,w)$. First observe that since $y \in (w,o]$, it cannot be the case that $\ell_d, r_d < w$ or $\ell_d, r_d > o$; in other words, we necessarily have that $w \leq r_d$ and $\ell_d \leq o$. If $\ell_d \leq r_d < o$, then our claim for $\cost(o)$ follows immediately. So, suppose that $r_d > o$, and thus $\ell_d < o$. Then, the inequality $\frac{\ell_d+r_d}{2} \leq o$ due to the fact that $y \in (w,o]$ implies that $r_d - o \leq o - \ell_d$, and thus $\delta(\ell_d,o)$ is the contribution of $d$ to $\cost(o)$. Similarly, if $w \leq \ell_d \leq r_d$, our claim for $\cost(w)$ follows immediately. Thus, we can suppose that $\ell_d < w$, and thus $r_d > w$. The inequality $\frac{\ell_d+r_d}{2} > w$ implies that $w-\ell_d < r_d - w$, and thus $\delta(r_d,w)$ is the contribution of $d$ to $\cost(w)$. By this, we have that 
\begin{align*}
\cost(w) &= \sum_{d \not\in D_w} \delta(r_d,w).
\end{align*}
Furthermore, for each $d \not\in D_w$,
\begin{align*}
\delta(\ell_d,o) 
&= o - \ell_d \\
&= o-r_d + r_d - w + w - \ell_d \\
&= \delta(r_d,w) + (o+w) - (\ell_d + r_d) \\
&\geq \delta(r_d,w) + (o+w) - 2o \\
&= \delta(r_d,w) - (o-w) \\
&= \delta(r_d,w) - \delta(w,o),
\end{align*}
where the inequality follows since $\frac{\ell_d + r_d}{2}\leq o$. Consequently, 
\begin{align*}
\cost(o) &= |D_w| \delta(w,o) + \sum_{d \not\in D_w} \delta(\ell_d,o) \\
&\geq |D_w| \delta(w,o) + \sum_{d \not\in D_w} \bigg( \delta(r_d,w) - \delta(w,o) \bigg) \\
&= (|D_w| - |\overline{D_w}|)\delta(w,o) + \sum_{d \not\in D_w} \delta(r_d,w) \\
&\geq \sum_{d \not\in D_w} \delta(r_d,w) \\
&= \cost(w),
\end{align*}
where the last inequality follows since $|D_w| \geq |\overline{D_w}|$.
\end{proof}

\subsection{Strategyproof mechanisms}\label{sec:sum-of-max:sp}
For strategyproof mechanisms, we will show a tight bound of $1+\sqrt{2}$. We start by showing the lower bound on the distortion of all strategyproof mechanisms. 

\begin{theorem}
For Sum-of-Max, the distortion of any strategyproof mechanism is at least $1+\sqrt{2}-\varepsilon$, for any $\varepsilon>0$.
\end{theorem}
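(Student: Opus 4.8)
The plan is to construct an explicit family of instances on which every strategyproof mechanism is forced to incur distortion arbitrarily close to $1+\sqrt2$. Since strategyproofness is a constraint on the mechanism (not something we get to choose), the lower bound must be an adversarial argument: I will exhibit a small instance, appeal to strategyproofness to pin down how the mechanism must behave on a one-agent deviation, and then play the two resulting instances off against each other so that the mechanism does badly on at least one of them. The natural starting point is two districts, each with two agents (so $k=\lambda=2$), which keeps the case analysis minimal while still exposing the distributed bottleneck: the winner must be one of the two district representatives, and each representative is itself constrained by strategyproofness within its district.

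The key steps, in order, would be the following. First, I would set up a base instance $I$ with the first district's agents placed so that its representative is forced (by unanimity, Lemma~\ref{lem:unanimous}, together with the strategyproofness-driven within-district behaviour) to lie at some point, say $0$, and the second district's agents placed around some point $a>0$ to be optimized later; compute the optimal Sum-of-Max cost of $I$ and the cost of each of the two possible outcomes $\{0, a\}$. Second, I would consider a deviation instance $J$ obtained by moving a single agent of one district; strategyproofness forces a relationship between $M(I)$ and $M(J)$ — concretely, if the mechanism outputs the representative far from a given agent in $I$, that agent must not be able to move the winner closer by misreporting, which constrains $M(J)$. Third, I would choose the free parameter(s) — essentially the spacing $a$ and the size of the deviation — so that whichever choice $M$ makes on $I$, either it is already bad on $I$, or it is forced to be bad on $J$; setting the two resulting ratios equal and solving yields the equation whose solution is $1+\sqrt2$ (the quadratic $x^2-2x-1=0$ is the expected culprit). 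Finally, I would add an $\varepsilon$ slack (perturbing positions slightly, or taking a limit of parameter values) to get the stated "$1+\sqrt2-\varepsilon$ for every $\varepsilon>0$" form, since the extremal configuration is typically approached but not attained.

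The main obstacle I anticipate is the case analysis over the mechanism's possible behaviours: unlike the unrestricted lower bounds, here I cannot simply say "the winner is $-1$ or $1$ and both are bad," because strategyproofness lets the mechanism's within-district representative depend on reported positions in a way that could, a priori, dodge a single fixed bad instance. So the argument has to be a genuine two-instance (or possibly three-instance) chain: identify the agent whose deviation is dangerous, use strategyproofness to transfer a guarantee from $I$ to $J$, and make sure the parameters are tuned so that no branch escapes. Getting the deviation to simultaneously (i) be a legal one-agent move, (ii) preserve or controllably change the optimal cost, and (iii) pin down $M(J)$ tightly enough is the delicate part; the rest — computing Sum-of-Max costs of midpoint-type configurations and optimizing a one-variable ratio — is routine. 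I would also double-check that the constructed instances respect the symmetric-district convention ($|N_d|=\lambda$ for all $d$), adding dummy agents co-located with existing ones if a larger $\lambda$ is needed to make the target representatives forced.
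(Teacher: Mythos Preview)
Your plan has the right flavor (pin down the mechanism's behaviour by a chain of instances and strategyproofness, then balance two bad ratios to get $1+\sqrt2$), but it commits to the wrong free parameter, and that is a genuine gap. You fix $k=2$ districts and propose to vary $\lambda$ (``adding dummy agents \ldots if a larger $\lambda$ is needed''). For the Sum-of-Max objective this cannot help: only the extreme agent in each district matters, so replicating agents inside a district does nothing, and with two districts the cost is always a sum of exactly two max-terms. There is no way to create the asymmetric weighting that makes the quadratic $x^2-2x-1=0$ appear. Concretely, with $k=2$ the across-district aggregation step is just ``pick one of two representatives,'' and you cannot force the mechanism to prefer one side by a ratio close to $1+\sqrt2$; any simple two-district pair of instances will only yield a bound of~$2$ (from unanimity) or~$3$ (from the within-district argument), not $1+\sqrt2$.

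The paper's construction does the opposite: it keeps $\lambda=2$ and lets $k$ be large, using $\alpha+\beta$ districts with $\beta/\alpha\approx 1+\sqrt2$. The chain of instances is also longer and more structured than a single one-agent deviation: a first instance forces the within-district rule to output an agent position; a second (two-district) instance forces that position to be the rightmost agent (else distortion $\ge 3$); a third $(\alpha+\beta)$-district instance with unanimous districts forces the across-district rule to pick the majority side (else distortion $\ge \beta/\alpha$); and only then does the final $(\alpha+\beta)$-district instance --- $\alpha$ unanimous districts at $1/2$ and $\beta$ districts at $\{0,1\}$ --- deliver distortion $2+\alpha/\beta$. Balancing $\beta/\alpha$ against $2+\alpha/\beta$ is exactly what produces $1+\sqrt2$. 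So the missing idea in your plan is not the deviation mechanics but the use of many districts with a tunable count ratio; without that, your two-district framework will not reach the target bound.
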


\begin{proof}
Assume towards a contradiction that there is a strategyproof mechanism with distortion strictly smaller than $1+\sqrt{2}-\varepsilon$, for any $\varepsilon>0$. Without loss of generality, we assume that when there are two districts with different representatives, we choose the leftmost as the final winner. We will prove the statement by showing some properties about the behavior of strategyproof mechanisms in particular instances.

\paragraph{Property (P1):} 
We claim that there is a district with two agents such that the mechanism chooses some agent position as the district representative. Consider a district $d$ with one agent positioned at $x$ and one agent positioned at $y > x$. If the mechanism chooses the representative to be $x$ or $y$, then we are done. Otherwise, suppose that the representative is chosen to be some $z \not\in \{x, y\}$. Due to strategyproofness, $z$ must also be the representative of the district $d'$ where any of the two agents has been moved to $z$; otherwise, in the single-district instance consisting of $d'$, the agent that is moved would have incentive to report that she is positioned as in $d$ to change the outcome to $z$.

\paragraph{Property (P2):} 
By Property (P1) there exists a district with two agents such that the mechanism chooses the district representative to be the position of one of the agents; without loss of generality we assume that the agents are positioned at $0$ and $1$. We claim that the representative of this district must be $1$ as otherwise the distortion would be at least $3$. Indeed, suppose otherwise that the representative is $0$, and consider the following instance $I_1$ with two districts:
\begin{itemize}
\item In the first district, there is an agent at $0$ and an agent at $1$. By the above discussion, the representative is $0$.
\item In the second district, there are two agents at $1/2$. Due to unanimity, the representative is $1/2$ (otherwise the distortion would be unbounded due to Lemma~\ref{lem:unanimous}). 
\end{itemize}
Since there are only two districts and two different representatives, the overall winner is $0$. 
But, $\cost(0|I_1) = (1 - 0) + (1/2-0) = 3/2$ and $\cost(1/2 | I_1) = (1 - 1/2) + (1/2-1/2) = 1/2$, leading to a distortion of $3$. 

\paragraph{Property (P3):} Let $\alpha < \beta$ be two (large) integers such that $\beta/\alpha = 1+\sqrt{2}-\delta$, for some arbitrarily small $\delta > 0$. We claim that in instances with $\alpha+\beta$ districts such that $1/2$ is the representative of $\alpha$ districts and $1$ is the representative of $\beta$ districts, the overall winner must be $1$ as otherwise the distortion would be $\beta/\alpha = 1+\sqrt{2}-\delta$. Indeed, suppose that the winner is $1/2$ in such a case, and consider the following instance $I_2$ with $\alpha+\beta$ districts:
\begin{itemize}
\item In $\alpha$ districts, there are two agents at $1/2$.
\item In $\beta$ districts, there are two agents at $1$. 
\end{itemize}
Due to unanimity (Lemma~\ref{lem:unanimous}), the representatives are $1/2$ and $1$, respectively, and the overall winner is $1/2$ by assumption. Then, $\cost(1/2|I_2) = \beta/2$ and $\cost(1|I_2)= \alpha/2$. So, the distortion is at least $\beta/\alpha = 1+\sqrt{2}-\delta$. 

\paragraph{Reaching a contradiction:} Now, we consider the following instance $I_3$ with $\alpha + \beta$ districts:
\begin{itemize}
\item In $\alpha$ districts, there are two agents at $1/2$. Due to unanimity the representative of all these districts is $1/2$.
\item In $\beta$ districts, there is one agent at $0$ and one agent at $1$. By property (P2), the representative of all these districts is $1$.
\end{itemize}
Since $1/2$ is the representative of $\alpha$ districts and $1$ is the representative of $\beta$ districts, by property (P3), the overall winner is $1$. We have that $\cost(1|I_3) = \frac{\alpha}{2} + \beta$ and $\cost(1/2|I_3) = \frac{\beta}{2}$. That is, the distortion is at least $2+\frac{\alpha}{\beta} > 2+\frac{1}{1+\sqrt{2}} = 1+\sqrt{2}$; a contradiction.
\end{proof}

For the tight upper bound, we consider the {\sc $\left(1-1/\sqrt{2}\right)k$-Leftmost-of-Rightmost} mechanism, which chooses the representative of each district to be the position of the rightmost agent therein, and then chooses the $\left(1-1/\sqrt{2}\right)k$-th leftmost representative as the final winner.\footnote{To be precise, the mechanism chooses the $\left\lceil\left(1-1/\sqrt{2}\right)k\right\rceil$-leftmost representative as the winner. To simplify our notation and discussion, we drop the ceiling.} See Mechanism~\ref{mech:LoR} for a detailed description. Clearly, the mechanism is strategyproof as it is an implementation of $p$-Statistic-of-$q$-Statistic with $p=\left(1-1/\sqrt{2}\right)k$ and $q=\lambda$. So, it suffices to show that it achieves a distortion of at most $1+\sqrt{2}$.

\SetCommentSty{mycommfont}
\begin{algorithm}[t]
\SetNoFillComment
\caption{\sc $\left(1-1/\sqrt{2}\right)k$-Leftmost-of-Rightmost}
\label{mech:LoR}
\For{each district $d\in D$}
{
    $y_d := $  rightmost agent\;
}
\Return $w := \left(1-1/\sqrt{2}\right)k$-th leftmost representative\; 
\end{algorithm}

\begin{theorem}
For Sum-of-Max, the distortion of {\sc $\left(1-1/\sqrt{2}\right)k$-Leftmost-of-Rightmost} is at most $1+\sqrt{2}$.
\end{theorem}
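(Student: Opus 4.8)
The plan is to avoid an explicit worst-case characterization and instead reparametrize each district by its \emph{midpoint} and \emph{radius}, which turns the Sum-of-Max cost into a plain sum of absolute values. For a district $d$ with leftmost agent at $\ell_d$ and rightmost agent at $r_d$, set $\mu_d:=\tfrac12(\ell_d+r_d)$ and $\rho_d:=\tfrac12(r_d-\ell_d)$. An elementary check shows $\max_{i\in N_d}\delta(x_i,z)=|z-\mu_d|+\rho_d$ for every location $z$, and hence
\[
\cost(z)=R+\sum_{d\in D}|z-\mu_d|,\qquad\text{where }R:=\sum_{d\in D}\rho_d .
\]
In particular an optimal location $o$ is a median of $\{\mu_d\}_{d\in D}$ and $\cost(o)=R+\sum_d|o-\mu_d|$, while the mechanism uses $y_d=r_d=\mu_d+\rho_d$ as the representative of $d$ and outputs the winner $w$, which is the $p$-th smallest among $\{\mu_d+\rho_d\}_{d\in D}$, where $p=\lceil(1-1/\sqrt2)k\rceil$. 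If $w=o$ the distortion is $1$; otherwise the plan is to bound $\cost(w)-\cost(o)$ from above and $\cost(o)$ from below, handling $w>o$ and $w<o$ separately because the mechanism (which favours rightmost agents) is not left--right symmetric.

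For $w>o$: since $w$ is the $p$-th smallest representative, at least $k-p+1$ districts satisfy $\mu_d+\rho_d\ge w$, i.e.\ $\rho_d\ge w-\mu_d$. A short case analysis on the sign of $\mu_d-o$ then shows that each such district contributes $|o-\mu_d|+\rho_d\ge w-o$ to $\cost(o)$, so $\cost(o)\ge(k-p+1)(w-o)$. Applying the triangle inequality term by term gives $\cost(w)-\cost(o)=\sum_d\bigl(|w-\mu_d|-|o-\mu_d|\bigr)\le k(w-o)$. Since $p-1<(1-1/\sqrt2)k$ we get $k<\sqrt2(k-p+1)$, so chaining the three bounds gives $\cost(w)-\cost(o)\le k(w-o)\le\sqrt2(k-p+1)(w-o)\le\sqrt2\,\cost(o)$, i.e.\ $\cost(w)\le(1+\sqrt2)\cost(o)$.

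For $w<o$: now at least $p$ districts satisfy $\mu_d+\rho_d\le w$, hence $\mu_d\le w<o$ and $o-\mu_d\ge(o-w)+\rho_d$, so each such district contributes $|o-\mu_d|+\rho_d\ge o-w$ to $\cost(o)$, giving $\cost(o)\ge p(o-w)$. The crude triangle bound on $\cost(w)-\cost(o)$ is now too weak, so instead I would split $\sum_d\bigl(|w-\mu_d|-|o-\mu_d|\bigr)$ according to the position of $\mu_d$: a district with $\mu_d\le w$ contributes $-(o-w)$, one with $\mu_d\ge o$ contributes $+(o-w)$, and one with $w<\mu_d<o$ contributes $2\mu_d-w-o<o-w$; since at least $p$ districts are of the first type, this yields $\cost(w)-\cost(o)\le(k-2p)(o-w)$. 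Finally, $p\ge(1-1/\sqrt2)k=k/(2+\sqrt2)$ is equivalent to $k-2p\le\sqrt2\,p$, so $\cost(w)-\cost(o)\le(k-2p)(o-w)\le\sqrt2\,p(o-w)\le\sqrt2\,\cost(o)$, again giving $\cost(w)\le(1+\sqrt2)\cost(o)$.

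The step I expect to be the main obstacle is pinpointing, in each of the two cases, the right subset of districts to charge $\cost(o)$ against so that the counting matches the mechanism's index $p$: when $w>o$ one uses the $k-p+1$ districts whose representative is $\ge w$, when $w<o$ the $p$ districts whose representative is $\le w$, and the threshold $1-1/\sqrt2$ is exactly the value making both $\sqrt2(k-p+1)\ge k$ and $\sqrt2\,p\ge k-2p$ hold. Once these are in place the rest is bookkeeping. A last point to watch is the rounding in $p=\lceil(1-1/\sqrt2)k\rceil$, but the displayed inequalities show rounding up only strengthens the $w>o$ case and does no harm when $w<o$, so it causes no real difficulty.
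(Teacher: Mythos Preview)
Your proof is correct and takes a genuinely different route from the paper. The paper argues directly with the per-district ``farthest agent'' $i_d$ and $i_d^*$ and, in the $w<o$ case, splits the districts into the $|L|=(1-1/\sqrt2)k$ smallest-representative ones versus the rest, combining the identity $\delta(\ell_d,o)=\delta(\ell_d,w)+\delta(w,o)$ on $L$ with a plain triangle inequality on $R$. You instead observe the clean identity $\max_{i\in N_d}\delta(x_i,z)=|z-\mu_d|+\rho_d$, which collapses Sum-of-Max into a constant $R$ plus a $1$-median objective on the midpoints $\{\mu_d\}$. This buys you two things: (i) the optimal $o$ is immediately a midpoint-median, and (ii) the difference $\cost(w)-\cost(o)$ telescopes into $\sum_d(|w-\mu_d|-|o-\mu_d|)$, which you can evaluate exactly by the position of $\mu_d$ relative to $[w,o]$, yielding the sharp bound $(k-2p)(o-w)$ in the $w<o$ case without any district-by-district case analysis on ``which endpoint is farthest''. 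The charging subsets you use (the $k-p+1$ districts with $y_d\ge w$ when $w>o$, the $p$ districts with $y_d\le w$ when $w<o$) coincide with the paper's, and the two crucial inequalities $\sqrt2(k-p+1)>k$ and $k-2p\le\sqrt2\,p$ you derive are exactly equivalent to the paper's uses of $|S|\ge k/\sqrt2$ and $|R|/|L|=1+\sqrt2$; your argument simply makes the role of the threshold $1-1/\sqrt2$ more transparent. The rounding remark at the end is also correct: $p=\lceil(1-1/\sqrt2)k\rceil$ gives $p-1<(1-1/\sqrt2)k\le p$, which is precisely what both cases need.
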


\begin{proof}
Consider any instance $I$. Let $w$ be the location chosen by the mechanism, and $o$ the optimal location. For each district $d$, let $i_d$ be the most distant agent from $w$, and $i_d^*$ the most distant agent from $o$. So, $\cost(w|I) = \sum_{d \in D} \delta(i_d,w)$, and $\cost(o|I) = \sum_{d \in D} \delta(i_d^*,o) \geq \sum_{d \in D}\delta(j,o)$ for any agent $j \in N_d$. We consider the following two cases depending on the relative positions of $w$ and $o$.

\paragraph{Case 1: $o < w$.} \ \\
Let $S = \{d \in D: y_d \geq w \}$ be the set of district representatives to the right of $w$. By the definition of $w$, we have that $|S| \geq \frac{k}{\sqrt{2}}$. Since $o < w \leq y_d$ for every $d\in S$ and $y_d \in N_d$, we have that 
\begin{align*}
\cost(o|I) \geq \sum_{d \in S} \delta(y_d,o) \geq |S|\cdot \delta(w,o) \geq \frac{k}{\sqrt{2}} \cdot \delta(w,o) 
\Leftrightarrow k \cdot \delta(w,o) \leq \sqrt{2} \cdot \cost(o|I).
\end{align*}
By the triangle inequality and since $i_d \in N_d$, we have
\begin{align*}
\cost(w|I) = \sum_{d \in D} \delta(i_d,w) 
&\leq \sum_{d \in D} \delta(i_d,o) + \sum_{d \in D}\delta(w,o) \\
&\leq \cost(o|I) + k \cdot \delta(w,o) \\
&\leq (1+\sqrt{2}) \cdot \cost(o|I).
\end{align*}

\paragraph{Case 2: $w < o$.} \ \\
We partition the districts into a set $L$ that includes $\left(1-\frac{1}{\sqrt{2}}\right)k$ districts from the one with the leftmost representative until the one with the $\left(1-\frac{1}{\sqrt{2}}\right)k$-th leftmost representative (that is, $w$), and a set $R$ that includes the remaining districts. By definition, we have that $|R|/|L| = 1+ \sqrt{2}$. For every district $d$, let $\ell_d$ and $r_d$ be the leftmost and rightmost agents in $d$, respectively. 
We make the following observations:
\begin{itemize}
\item For every $d \in L$, since $y_d$ is the rightmost agent of $d$ and $y_d \leq w < o$, it must be the case that $i_d = i_d^* = \ell_d$. Due to the positions of $\ell_d$, $w$ and $o$, we have that $\delta(\ell_d,o) = \delta(\ell_d,w) + \delta(w,o)$.

\item For every $d \in R$, by the triangle inequality, we have that $\delta(i_d,w) \leq \delta(i_d,o) + \delta(w,o)$. Since $\delta(i_d,o) \leq \delta(i_d^*,o)$ by the definition of $i_d^*$, we further have that $\delta(i_d,w) \leq \delta(i_d^*,o) + \delta(w,o)$. 
\end{itemize}
Hence, 
\begin{align*}
\cost(w|I) &= \sum_{d \in D} \delta(i_d,w) = \sum_{d \in L} \delta(\ell_d,w) + \sum_{d \in R} \delta(i_d,w) \\
&\leq \sum_{d \in L} \bigg( \delta(\ell_d,w) + \delta(w,o) \bigg) - |L|\delta(w,o) + \sum_{d \in R} \bigg( \delta(i_d^*,o) + \delta(w,o) \bigg) \\
&= \cost(o|I) + (|R|-|L|) \delta(w,o). 
\end{align*}
Since $y_d \leq w < o$ for every $d \in L$ and $y_d \in N_d$, we have that 
\begin{align*}
\cost(o|I) \geq \sum_{d \in L} \delta(y_d,o) \geq |L| \cdot \delta(w,o) \Leftrightarrow \delta(w,o) \leq \frac{1}{|L|} \cdot \cost(o|I).
\end{align*}
Therefore, we obtain
\begin{align*}
\cost(w|I) \leq \cost(o|I) + \frac{|R|-|L|}{|L|} \cdot \cost(o|I) = \frac{|R|}{|L|}\cdot \cost(o|I) = (1+\sqrt{2})\cdot \cost(o|I),
\end{align*}
as desired. 
\end{proof}


\section{Max-of-Sum}
We now turn our attention to the last objective, Max-of-Sum, which is the maximum over each district of the total individual cost therein. We show a tight bound of $2$ for unrestricted mechanisms and a tight bound of $1+\sqrt{2}$ for strategyproof mechanisms. 

\subsection{Unrestricted mechanisms}\label{sec:max-of-sum:unrestricted}
Since the lower bound of $2$ for the Max cost objective holds even when there is a single agent in each district, it extends to the case of Max-of-Sum as well. For the upper bound, we consider the {\sc Arbitrary-of-Avg} mechanism, which chooses the representative of each district to be the average of the positions of the agents in the district, and then chooses an arbitrary representative (e.g., the leftmost) as the final winner. See Mechanism~\ref{mech:LoAvg} for a detailed description. 

\SetCommentSty{mycommfont}
\begin{algorithm}[h]
\SetNoFillComment
\caption{\sc Arbitrary-of-Avg}
\label{mech:LoAvg}
\For{each district $d\in D$}
{
    $y_d := \frac{\sum_{i \in N_d}{x_i}}{\lambda}$ \;
}
\Return $w := \min_{d \in D}{y_d}$ \; 
\end{algorithm}

\begin{theorem}\label{thm:MoS-unrestricted-upper}
For Max-of-Sum, the distortion of {\sc Arbitrary-of-Avg} is at most $2$.
\end{theorem}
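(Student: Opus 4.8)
The plan is to bound the distortion directly, showing $\cost(w|I)\le 2\,\cost(o|I)$ for every instance $I$, where $w$ is the location returned by {\sc Arbitrary-of-Avg} and $o$ is an optimal location for Max-of-Sum. For a district $d\in D$ and a point $z$, write $c_d(z)=\sum_{i\in N_d}\delta(x_i,z)$ for the internal total cost of $d$ at $z$, so that by definition $\cost(z|I)=\max_{d\in D}c_d(z)$. The only structural fact about the mechanism we need is that $w=y_{d^*}$ for some district $d^*$ and that $y_{d^*}$ is the \emph{average} of the positions of the agents in $d^*$; in particular the argument is insensitive to the tie-breaking rule used to pick the representative winner.

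First I would control the displacement $\delta(w,o)$ using the optimal cost. Since $w=\tfrac{1}{\lambda}\sum_{i\in N_{d^*}}x_i$, we have $\lambda(w-o)=\sum_{i\in N_{d^*}}(x_i-o)$, hence by the triangle inequality
\[
\lambda\,\delta(w,o)=\Bigl|\sum_{i\in N_{d^*}}(x_i-o)\Bigr|\le\sum_{i\in N_{d^*}}\delta(x_i,o)=c_{d^*}(o)\le\cost(o|I).
\]
This is the crucial step, and the one place where it matters that the district representatives are averages: a district average cannot be far from $o$ unless that district's internal cost at $o$ is already large, and the latter is at most the Max-of-Sum optimum.

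The rest is a district-by-district triangle inequality. Fix any $d\in D$; for each of its $\lambda$ agents $i$ we have $\delta(x_i,w)\le\delta(x_i,o)+\delta(o,w)$, and summing over $N_d$ gives $c_d(w)\le c_d(o)+\lambda\,\delta(w,o)\le c_d(o)+\cost(o|I)\le 2\,\cost(o|I)$, where the last step uses $c_d(o)\le\cost(o|I)$. Taking the maximum over $d\in D$ yields $\cost(w|I)=\max_{d\in D}c_d(w)\le 2\,\cost(o|I)$, as desired. I do not anticipate a real obstacle here: the only subtlety is recognizing that the averaging step must be invoked precisely to bound $\lambda\,\delta(w,o)$, after which the proof is a short, tie-breaking-agnostic computation that matches the lower bound of $2$ inherited from Max cost.
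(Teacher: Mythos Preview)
Your proof is correct and follows essentially the same approach as the paper: bound $\lambda\,\delta(w,o)$ by $\cost(o|I)$ using that $w$ is the average of the agents in its own district, then combine with a per-district triangle inequality. The only differences are cosmetic---you avoid the WLOG assumption $w<o$ via the absolute-value form of the averaging identity, and you bound $c_d(w)$ for every $d$ before taking the max, whereas the paper fixes the maximizing district up front---but the key idea and the two inequalities are identical.
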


\begin{proof}
Consider any instance $I$. Let $w$ be the location chosen by the mechanism, and $o$ the optimal location; without loss of generality, we assume that $w < o$. Denote by $d^*$ a district that defines the cost of $w$, that is, $d^* \in \arg\max_{d \in D} \sum_{i \in N_d} \delta(x_i,w)$. Also, denote by $d_w$ a district represented by $w$, that is, $w = \frac{1}{\lambda}\sum_{i \in N_{d_w}}x_i \Leftrightarrow \sum_{i \in N_{d_w}}(w-x_i) = 0$. By the triangle inequality, we have that
\begin{align*}
\cost(w|I) 
= \sum_{i \in N_{d^*}} \delta(x_i,w) 
\leq \sum_{i \in N_{d^*}} \delta(x_i,o) + \sum_{i \in N_{d^*}} \delta(w,o)
\leq \cost(o|I) + \lambda \delta(w,o).
\end{align*}
By the definition of $d_w$ and since it consists of $\lambda$ agents, we have that
\begin{align*}
\lambda \delta(w,o) &= \lambda (o-w) = \lambda (o-w) + \sum_{i \in N_{d_w}}(w-x_i) = \sum_{i \in N_{d_w}}(o-x_i) \leq \sum_{i \in N_{d_w}} \delta(x_i,o) \leq \cost(o|I),
\end{align*}
where the inequality follows since $\delta(x_i,o) = o-x_i$ when $x_i \leq o$ and $\delta(x_i,o) = x_i - o \geq o-x_i$ when $x_i \geq o$. Therefore, we obtain that $\cost(w|I) \leq 2\cdot \cost(o|I)$, as desired. 
\end{proof}

\subsection{Strategyproof mechanisms}\label{sec:max-of-sum:sp}
We now turn out attention to strategyproof mechanisms and first show a lower bound of $1+\sqrt{2}$. 

\begin{theorem}
For Max-of-Sum, the distortion of any strategyproof mechanism is at least $1+\sqrt{2}-\varepsilon$, for any $\varepsilon>0$.
\end{theorem}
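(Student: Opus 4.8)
The plan is to mimic the structure of the Sum-of-Max lower bound, building up a sequence of forced behaviours of any strategyproof mechanism via strategyproofness and cardinal unanimity (Lemma~\ref{lem:unanimous}), and then combining them in a single instance to extract a distortion of at least $1+\sqrt{2}-\varepsilon$. As before, I would break ties in favour of the leftmost representative when the mechanism faces exactly two distinct representatives, and argue that this is without loss of generality.

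First I would establish a ``two-agent district'' property analogous to (P1): for a district with two agents at distinct positions $x<y$, strategyproofness forces the representative to be one of $\{x,y\}$, since if the representative were some $z\notin\{x,y\}$, then in the single-district instances where one of the two agents is relocated to $z$ the representative must remain $z$ (else the relocated agent in the $z$-instance could profitably report the original position). Next, the analogue of (P2): normalising the two agents to positions $0$ and $1$, I would argue the representative must be $1$ rather than $0$ — otherwise, pairing this district with a district of two agents at $1/2$ and invoking unanimity plus the leftmost-tie-breaking rule, the overall winner is $0$, and comparing $\cost(0)$ against $\cost(1/2)$ under the Max-of-Sum objective yields distortion at least some absolute constant (I expect $\geq 3$, or at any rate strictly above $1+\sqrt 2$), a contradiction. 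Then the analogue of (P3): for large integers $\alpha<\beta$ with $\beta/\alpha$ close to $1+\sqrt2$, in any instance where $1/2$ represents $\alpha$ districts (two agents each at $1/2$) and $1$ represents $\beta$ districts (two agents each at $1$), the winner must be $1$, because if it were $1/2$ the two competing Max-of-Sum costs over these district types would give a ratio of roughly $\beta/\alpha$.

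The final step assembles instance $I_3$: $\alpha$ districts with two agents at $1/2$ (representative $1/2$ by unanimity) and $\beta$ districts with one agent at $0$ and one at $1$ (representative $1$ by the (P2)-analogue). By the (P3)-analogue the overall winner is $1$. Under Max-of-Sum, the cost of $1$ is driven by a district of the second type (with one agent at distance $1$), giving $\cost(1|I_3)=1$, while the cost of $1/2$ is driven by a district of the second type as well, giving $\cost(1/2|I_3)=1$ — so here I would need to be careful: with only two agents per district the Max-of-Sum cost of a $\{0,1\}$-district at point $z$ is $|z|+|1-z|$, which equals $1$ for all $z\in[0,1]$, and a $\{1/2,1/2\}$-district contributes $2|z-1/2|$. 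Thus $\cost(1|I_3)=\max\{1,\ 2\cdot\tfrac12\}=1$ and $\cost(1/2|I_3)=\max\{1,\ 0\}=1$, which gives no distortion at all. This tells me the construction must use districts with \emph{more} agents (so that the $\{1/2\}$-type districts genuinely dominate): e.g.\ put many agents at $1/2$ in the first type and keep the second type as (say) many agents split between $0$ and $1$, tuning multiplicities so that $\cost(1|I_3)=\tfrac{\alpha}{2}+\beta$-style bookkeeping reproduces the ratio $2+\alpha/\beta>2+\tfrac{1}{1+\sqrt2}=1+\sqrt2$, exactly as in the Sum-of-Max proof.

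The main obstacle, then, is getting the district \emph{sizes} and internal configurations right so that (i) each intermediate ``property'' is genuinely forced with a clean enough contradiction bound, and (ii) the Max-of-Sum costs in the final instance land on the arithmetic $2+\alpha/\beta$. Max-of-Sum is a max over districts of a \emph{sum} within a district, so unlike Sum-of-Max the within-district contribution at a fixed point scales with district size and does not collapse to a single agent's distance; I would exploit this by making the ``$1/2$-districts'' large and homogeneous (all agents at $1/2$), so their Max-of-Sum cost at $1$ is $\lambda/2$ per district while contributing $0$ at $1/2$, and making the ``$\{0,1\}$-districts'' contribute a fixed dominant sum at $1/2$. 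The rest is the same triangle-inequality-free direct computation and an application of $\beta/\alpha\to 1+\sqrt2$ to push the ratio past $1+\sqrt2-\varepsilon$.
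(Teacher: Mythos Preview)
Your proposal has a genuine structural gap: it tries to port the Sum-of-Max argument, where the free parameter is the \emph{number of districts} of each type ($\alpha$ versus $\beta$), to Max-of-Sum, where that parameter is invisible. Max-of-Sum is a maximum over districts, so having $\beta$ copies of a district is no different from having one copy; the ratio $\beta/\alpha$ you invoke at the end simply does not appear in the objective. Concretely, your (P3) fails outright: with $\alpha$ districts of agents all at $1/2$ and $\beta$ districts of agents all at $1$, we have $\cost(1/2)=\max\{0,\lambda/2\}=\lambda/2$ and $\cost(1)=\max\{\lambda/2,0\}=\lambda/2$, so the distortion is $1$ regardless of $\alpha,\beta$. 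Your (P2) fails for the same reason: a district with agents split evenly between $0$ and $1$ has total distance exactly $\lambda/2$ to \emph{every} point in $[0,1]$, so pairing it with a district at $1/2$ gives $\cost(0)=\cost(1/2)=\lambda/2$ and no contradiction. You noticed the $I_3$ computation collapses, but the diagnosis (``need more agents per district'') and the fix (``tune multiplicities so that $\cost(1|I_3)=\alpha/2+\beta$'') are both wrong --- that arithmetic is Sum-of-Max arithmetic, summing over districts, which is precisely what Max-of-Sum does not do.

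The paper's proof places the asymmetry \emph{inside} each district rather than across districts: it uses just two districts, one with $(1+\sqrt{2})x$ agents at $0$ and $x$ agents at $1$, the other with $x$ agents at $1$ and $(1+\sqrt{2})x$ agents at $2$. Strategyproofness (via the usual one-agent-at-a-time relocation argument) forces the representatives to be $0$ and $2$ respectively, and then either choice has Max-of-Sum cost $2(1+\sqrt{2})x+x$ while the midpoint $1$ has cost $(1+\sqrt{2})x$, giving the ratio $2+\frac{1}{1+\sqrt{2}}=1+\sqrt{2}$. The key move you are missing is that, for Max-of-Sum, the $1+\sqrt{2}$ must be encoded as a ratio of \emph{agent counts at different positions within a single district}, not as a ratio of district counts.
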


\begin{proof}
Suppose towards a contradiction that there is a strategyproof mechanism with distortion strictly smaller than 
$1+\sqrt{2}-\varepsilon$, for any $\varepsilon>0$. 

\paragraph{Property (P1):} Consider a district with $(1+\sqrt{2})x$ agents at $0$ and $x$ agents at $1$, where $x$ is an arbitrarily large integer.\footnote{To be precise, since the number of agents must be an integer, we would need to have $\lceil (1+\sqrt{2})x \rceil$ agents at $0$. We simplify our notation by dropping the ceilings, but it should be clear that this does not affect our arguments.} We claim that the mechanism must choose $0$ as the representative of this district as otherwise the distortion would be at least $1+\sqrt{2}$. Indeed, suppose that the representative is some $y \neq 0$. By moving one of the agents at $1$ to $y$, we obtain a new district whose representative must still be $y$; otherwise, in the instance that consists only of this new district, the agent at $y$ would have incentive to misreport her position as $1$, thus leading to the representative (and the final winner) to change to $y$. By induction, we obtain that $y$ must be the representative of the district with $(1+\sqrt{2})x$ agents at $0$ and $x$ agents at $y$. In the instance $I$ that consists of only the latter district, the winner is $y$ with $\cost(y|I) = (1+\sqrt{2})x\cdot |y|$, whereas $\cost(0|I) = x\cdot|y|$, leading to a distortion of at least $1+\sqrt{2}$.

\paragraph{Property (P2):} Consider a district with $x$ agents at $1$ and $(1+\sqrt{2})x$ agents at $2$. We claim that the mechanism must choose $2$ as the representative of this district as otherwise the distortion would be at least $1+\sqrt{2}$. This follows by arguments similar to those for property (P1). 

\paragraph{Reaching a contradiction:}
Consider the following instance $J$ with two districts:
\begin{itemize}
    \item In the first district, there are $(1+\sqrt{2})x$ agents at $0$ and $x$ agents at $1$. 
    \item In the second district, there are $x$ agents at $1$ and $(1+\sqrt{2})x$ agents at $2$.
\end{itemize}
By properties (P1) and (P2), the representatives of the two districts must be $0$ and $2$, respectively, and thus one of these two locations is chosen as the final winner. However, $\cost(0|J) = \cost(2|J) = 2(1+\sqrt{2})x+x$, while $\cost(1|J) = (1+\sqrt{2})x$, leading to a distortion of $2 + \frac{1}{1+\sqrt{2}} = 1+\sqrt{2}$. 
\end{proof}

For the tight upper bound, we consider the {\sc Rightmost-of-$\left(1-1/\sqrt{2}\right)\lambda$-Leftmost} mechanism, which chooses the representative of each district to be the position of the $\left(1-1/\sqrt{2}\right)\lambda$-th leftmost agent therein, and then chooses the rightmost representative as the final winner. See Mechanism~\ref{mech:RoL} for a detailed description. This mechanism is an implementation of $p$-Statistic-of-$q$-Statistic with $p=k$ and $q=\left(1-1/\sqrt{2}\right)\lambda$, and is thus strategyproof. So, it suffices to show that it achieves a distortion of at most $1+\sqrt{2}$.

\SetCommentSty{mycommfont}
\begin{algorithm}[ht]
\SetNoFillComment
\caption{\sc Rightmost-of-$\left(1-1/\sqrt{2}\right)\lambda$-Leftmost}
\label{mech:RoL}
\For{each district $d\in D$}
{
    $y_d := \left(1-1/\sqrt{2}\right)\lambda$-th leftmost agent\;
}
\Return $w := $ rightmost representative\; 
\end{algorithm}

\begin{theorem}
For Max-of-Sum, the distortion of {\sc Rightmost-of-$\left(1-1/\sqrt{2}\right)\lambda$-leftmost} is at most $1+\sqrt{2}$. 
\end{theorem}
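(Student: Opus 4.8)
The plan is to mimic the structure of the Sum-of-Max upper bound (Case~1 vs.\ Case~2 on the relative position of $w$ and $o$), but with the roles of the two statistics swapped: now the ``inner'' statistic is the $\left(1-1/\sqrt{2}\right)\lambda$-th leftmost agent within each district, and the ``outer'' statistic picks the rightmost representative. Fix an instance $I$, let $w$ be the mechanism's output and $o$ the optimal location. Let $d^*$ be a district achieving $\cost(w|I) = \sum_{i \in N_{d^*}} \delta(x_i,w)$. As in the proof of Theorem~\ref{thm:MoS-unrestricted-upper}, the triangle inequality gives $\cost(w|I) \le \cost(o|I) + \lambda\,\delta(w,o)$, so the whole task reduces to bounding $\lambda\,\delta(w,o)$ by $\sqrt{2}\cdot\cost(o|I)$. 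Since $w$ is the rightmost representative, \emph{every} district $d$ has its $\left(1-1/\sqrt{2}\right)\lambda$-th leftmost agent at position $y_d \le w$; hence in each district at least $\left(1-1/\sqrt{2}\right)\lambda$ agents lie weakly to the left of $w$.

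First I would handle the case $w \le o$. Here, for the district $d^*$ (or indeed for any district), the $\left(1-1/\sqrt{2}\right)\lambda$ agents weakly left of $w \le o$ each satisfy $\delta(x_i,o) = o - x_i \ge o - w = \delta(w,o)$, so $\cost(o|I) \ge \sum_{i \in N_{d^*}}\delta(x_i,o) \ge \left(1-1/\sqrt{2}\right)\lambda\cdot\delta(w,o)$, which rearranges to $\lambda\,\delta(w,o) \le \frac{1}{1-1/\sqrt{2}}\cdot\cost(o|I) = (2+\sqrt{2})\cdot\cost(o|I)$. That constant is too large by itself, so for this case I would instead argue more carefully: $\cost(o|I) \ge \sum_{i\in N_{d^*}}\delta(x_i,o)$, and split $N_{d^*}$ into the agents weakly left of $w$ (at least a $(1-1/\sqrt2)$-fraction, each contributing $\ge\delta(w,o)$) and the rest; combining with $\cost(w|I)=\sum_{i\in N_{d^*}}\delta(x_i,w) \le \sum_{i\in N_{d^*}}\delta(x_i,o) + \lambda\delta(w,o)$ and doing the bookkeeping district-by-district (not just on $d^*$), the excess $\lambda\delta(w,o)$ gets charged against the surplus in $o$'s cost, and the ratio should collapse to $1+\sqrt2$ exactly as in the Sum-of-Max analysis. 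The case $o < w$ is the genuinely different one: now the inner statistic $y_d \le w$ need not help directly, so I would instead use that $w$ being the rightmost representative forces \emph{some} specific district $d_w$ with $y_{d_w}=w$, meaning at least $\frac{1}{\sqrt2}\lambda$ agents of $d_w$ lie weakly to the right of $w$ (hence strictly right of~$o$ or between), giving $\cost(o|I)\ge\sum_{i\in N_{d_w}}\delta(x_i,o)\ge\frac{1}{\sqrt2}\lambda\cdot\delta(w,o)$, i.e.\ $\lambda\delta(w,o)\le\sqrt2\cdot\cost(o|I)$; plugging into $\cost(w|I)\le\cost(o|I)+\lambda\delta(w,o)$ yields $(1+\sqrt2)\cdot\cost(o|I)$ directly.

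The main obstacle I anticipate is the case $w \le o$: the crude per-district bound loses a factor and gives $2+\sqrt2$ instead of $1+\sqrt2$, so the argument must, as in Case~2 of the Sum-of-Max proof, account for the fact that \emph{every} district (not only $d^*$) has at least $\left(1-1/\sqrt2\right)\lambda$ agents to the left of $w$, and telescope the quantity $(|R|-|L|)\delta(w,o)$-style correction terms so that the surplus in $\cost(o|I)$ from the left agents of $d^*$ exactly cancels the $\lambda\delta(w,o)$ overshoot. Getting the split of $N_{d^*}$ and the direction of each inequality right — in particular verifying $\frac{\cost(o|I)+\lambda\delta(w,o)}{\cost(o|I)}\le 1+\sqrt2$ forces $\lambda\delta(w,o)\le\sqrt2\cdot\cost(o|I)$ and checking this really does follow from the $(1-1/\sqrt2)$-fraction bound once one also uses that $d^*$ may differ from $d_w$ — is the delicate bookkeeping step; everything else is triangle inequality plus the definition of the two statistics.
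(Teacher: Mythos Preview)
Your case $o<w$ is correct and coincides with the paper's Case~1: pick the district $d_w$ with $y_{d_w}=w$, use that at least $\lambda/\sqrt2$ of its agents lie weakly right of $w>o$, and plug $\lambda\,\delta(w,o)\le\sqrt2\cdot\cost(o|I)$ into the triangle-inequality bound.

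The gap is in the case $w<o$. You frame the task as proving $\lambda\,\delta(w,o)\le\sqrt2\cdot\cost(o|I)$, but that inequality is simply false here. Take a single district with $(1-1/\sqrt2)\lambda$ agents at $0$ and $(1/\sqrt2)\lambda$ agents at $1+\sqrt2$; then $w=0$, $o=1+\sqrt2$, $\cost(o|I)=(1-1/\sqrt2)\lambda(1+\sqrt2)=\lambda/\sqrt2$, yet $\lambda\,\delta(w,o)=(1+\sqrt2)\lambda$, which exceeds $\sqrt2\cdot\cost(o|I)=\lambda$. So the crude bound $\cost(w|I)\le\cost(o|I)+\lambda\,\delta(w,o)$ cannot be closed this way, no matter how you ``telescope'' afterwards. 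The ``district-by-district'' idea is also a red herring: Max-of-Sum is determined by a single district $d^*$, and looking at other districts gives no leverage beyond the single lower bound $\cost(o|I)\ge\cost_{d^*}(o)$.

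What you actually need is the sharper in-district identity: for agents $i$ in $d^*$ with $x_i\le w$ (a set $L$ of size at least $(1-1/\sqrt2)\lambda$, since $y_{d^*}\le w$) one has the \emph{equality} $\delta(x_i,w)=\delta(x_i,o)-\delta(w,o)$, not just the triangle inequality. Summing over $L$ and using $\delta(x_i,w)\le\delta(x_i,o)+\delta(w,o)$ on the complement $R$ gives
\[
\cost_{d^*}(w)\;\le\;\cost_{d^*}(o)+(|R|-|L|)\,\delta(w,o),
\]
with $|R|-|L|\le(\sqrt2-1)\lambda$; combining with $\cost_{d^*}(o)\ge|L|\,\delta(w,o)\ge(1-1/\sqrt2)\lambda\,\delta(w,o)$ yields $(1+\sqrt2)$ exactly. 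The paper reaches the same endpoint by a slightly different route: it splits on whether the in-district median $o_{d^*}$ lies left or right of $w$, uses monotonicity of the in-district social cost to reduce to comparing $\cost_{d^*}(y_{d^*})$ with $\cost_{d^*}(o_{d^*})$, and then runs the $L/R$ split at $y_{d^*}$. Either version works, but both require abandoning the $\lambda\,\delta(w,o)$ overshoot and working with $(|R|-|L|)\,\delta(w,o)$ from the start.
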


\begin{proof}
Let $w$ be the location chosen be the mechanism, and $o$ the optimal location. Denote by $d^*$ a district that gives the max sum for $w$, and by $d_w$ a district represented by $w$. Also, for any district $d$, we denote by $\cost_d(x) = \sum_{i \in N_d} \delta(i,x)$ the total distance of the agents in $d$ from location $x$, and let $o_d$ be the location that minimizes this distance (that is, $o_d$ is the median agent of $d$). Clearly, by definition, we have that $\cost(w) = \cost_{d^*}(w)$, and $\cost_d(o) \leq \cost(o)$ for every district $d$. We consider the following two cases:

\paragraph{Case 1: $o < w$.} \ \\
By the definition of $d^*$ and the triangle inequality, we have
\begin{align*}
\cost(w) = \sum_{i \in N_{d^*}} \delta(i,w) \leq \sum_{i \in N_{d^*}} \delta(i,o) + \sum_{i \in N_{d^*}} \delta(o,w) \leq \cost(o) + \lambda \delta(o,w). 
\end{align*}
Let $S = \{i\in N_{d_w}: x_i \geq w\}$ be the set of agents that are positioned at the right of (or exactly at) $w$ in $d_w$. 
By the definition of $w$, $|S| \geq \frac{1}{\sqrt{2}}\lambda$. Since $o < w$, we have 
\begin{align*}
\cost_{d_w}(o) \geq |S| \cdot \delta(w,o) \geq \frac{1}{\sqrt{2}}\lambda \cdot \delta(w,o)
\Leftrightarrow 
\lambda \delta(w,o) \leq \sqrt{2} \cdot \cost_{d_w}(o) \leq \sqrt{2} \cdot \cost(o).
\end{align*}
By combining everything together, we obtain a bound of $1+\sqrt{2}$.

\paragraph{Case 2: $w < o$.} \ \\
We consider the following two subcases:
\begin{itemize}
\item  $o_{d^*} \leq w < o$. By the monotonicity of the social cost\footnote{It is a well-known fact that the social cost objective is monotone in the locations. In particular, for any set of agents $S$, if $y_1 \in \arg\min_{x} \sum_{i \in S} \delta(i,x)$, then $\sum_{i \in S} \delta(i,y_1) \leq \sum_{i \in S} \delta(i,y_2) \leq \sum_{i \in S} \delta(i,y_3)$ for any $y_1 \leq y_2 \leq y_3$ or $y_3 \leq y_2 \leq y_1$.} for the agents in district $d^*$, we have that $\cost_{d^*}(o_{d^*}) \leq \cost_{d^*}(w) \leq \cost_{d^*}(o)$, and thus $\cost(w) \leq \cost(o)$. 

\item $w < o_{d^*}$. Since $w$ is the rightmost representative, it must be the case that $y_{d^*} \leq w < o_{d^*}$. So, again by the monotonicity of the social cost within the district $d^*$, we have that $\cost_{d^*}(o_{d^*}) \leq \cost_{d^*}(w) \leq \cost_{d^*}(y_{d^*})$. We will now argue that $\cost_{d^*}(y_{d^*}) \leq (1+\sqrt{2})\cost_{d^*}(o_{d^*})$. 
Let $L$ be the set that includes $(1-\frac{1}{\sqrt{2}})\lambda$ agents of $d^*$ from the leftmost to the $(1-\frac{1}{\sqrt{2}})\lambda$-th leftmost agent (that is, $y_{d^*}$), and the set $R$ that includes the remaining agents. By definition, we have that $|R|/|L|=1+\sqrt{2}$. Now, observe that
\begin{itemize}
    \item For every agent $i \in L$, $i \leq y_{d^*}$, and thus $\delta(i,o_{d^*}) = \delta(i,y_{d^*}) + \delta(y_{d^*},o_{d^*})$.
    \item For every agent $i \in R$, $i \geq y_{d^*}$, and thus $\delta(i,y_{d^*}) \leq \delta(i,o_{d^*}) + \delta(y_{d^*},o_{d^*})$.
\end{itemize}
Hence, 
\begin{align*}
\cost_{d^*}(y_{d^*}) &= \sum_{i \in N_{d^*}} \delta(i,y_{d^*}) = \sum_{i \in L} \delta(i,y_{d^*}) + \sum_{i \in R} \delta(i,y_{d^*}) \\
&\leq \sum_{i \in L} \delta(i,y_{d^*}) + \sum_{i \in R} \bigg( \delta(i,o_{d^*}) + \delta(y_{d^*},o_{d^*}) \bigg) \\
&= \sum_{i \in L} \bigg( \delta(i,y_{d^*}) + \delta(y_{d^*},o_{d^*}) \bigg) + \sum_{i \in R} \delta(i,o_{d^*})  +(|R|-|L|) \delta(y_{d^*},o_{d^*}) \\
&= \cost_{d^*}(o_{d^*}) + (|R|-|L|) \delta(y_{d^*},o_{d^*}).
\end{align*}
Since $y_{d^*} < o_{d^*}$, we also have that $\cost_{d^*}(o) \geq |L| \delta(y_{d^*},o_{d^*})$, and thus
\begin{align*}
\cost_{d^*}(y_{d^*}) &\leq \cost_{d^*}(o_{d^*}) + \frac{|R|-|L|}{|L|} \cost_{d^*}(o_{d^*}) = \frac{|R|}{|L|}\cost_{d^*}(o_{d^*}) = (1 + \sqrt{2})\cost_{d^*}(o_{d^*}). 
\end{align*}
From this, we finally get that $\cost_{d^*}(w) \leq (1+\sqrt{2}) \cost_{d^*}(o_{d^*}) \leq (1+\sqrt{2}) \cost_{d^*}(o)$, and thus $\cost(w) \leq (1+\sqrt{2}) \cost(o)$.
\hfill $\qedhere$
\end{itemize}
\end{proof}

\section{Open problems}
In this paper we settled the distortion of unrestricted and strategyproof mechanisms for the distributed single-facility location problem in terms of social objectives that are combinations of sum and max. There are several interesting directions for future work, such as to extend our work to more general metric spaces or to define further meaningful objectives and study similar questions about efficiency and strategyproofness. Beyond the single-facility location problem that we studied here, one could consider settings with more facilities and agents that have heterogeneous preferences over the facilities.

\bibliographystyle{plainnat}
\bibliography{references}

\begin{thebibliography}{40}
\providecommand{\natexlab}[1]{#1}
\providecommand{\url}[1]{\texttt{#1}}
\expandafter\ifx\csname urlstyle\endcsname\relax
  \providecommand{\doi}[1]{doi: #1}\else
  \providecommand{\doi}{doi: \begingroup \urlstyle{rm}\Url}\fi

\bibitem[Abramowitz et~al.(2019)Abramowitz, Anshelevich, and
  Zhu]{abramowitz2019awareness}
Ben Abramowitz, Elliot Anshelevich, and Wennan Zhu.
\newblock Awareness of voter passion greatly improves the distortion of metric
  social choice.
\newblock In \emph{Proceedings of the The 15th Conference on Web and Internet
  Economics (WINE)}, pages 3--16, 2019.

\bibitem[Amanatidis et~al.(2021)Amanatidis, Birmpas, Filos{-}Ratsikas, and
  Voudouris]{amanatidis2020peeking}
Georgios Amanatidis, Georgios Birmpas, Aris Filos{-}Ratsikas, and Alexandros~A.
  Voudouris.
\newblock Peeking behind the ordinal curtain: Improving distortion via cardinal
  queries.
\newblock \emph{Artificial Intelligence}, 296:\penalty0 103488, 2021.

\bibitem[Amanatidis et~al.(2022{\natexlab{a}})Amanatidis, Birmpas,
  Filos{-}Ratsikas, and Voudouris]{amanatidis2021matching}
Georgios Amanatidis, Georgios Birmpas, Aris Filos{-}Ratsikas, and Alexandros~A.
  Voudouris.
\newblock A few queries go a long way: Information-distortion tradeoffs in
  matching.
\newblock \emph{Journal of Artificial Intelligence Research}, 74,
  2022{\natexlab{a}}.

\bibitem[Amanatidis et~al.(2022{\natexlab{b}})Amanatidis, Birmpas,
  Filos-Ratsikas, and Voudouris]{amanatidis2022don}
Georgios Amanatidis, Georgios Birmpas, Aris Filos-Ratsikas, and Alexandros~A.
  Voudouris.
\newblock Don't roll the dice, ask twice: The two-query distortion of matching
  problems and beyond.
\newblock In \emph{Proceedings of the 36th Conference on Neural Information
  Processing Systems ({NeurIPS})}, 2022{\natexlab{b}}.

\bibitem[Anastasiadis and Deligkas(2018)]{anastasiadis2018heterogeneous}
Eleftherios Anastasiadis and Argyrios Deligkas.
\newblock Heterogeneous facility location games.
\newblock In \emph{Proceedings of the 17th International Conference on
  Autonomous Agents and MultiAgent Systems ({AAMAS})}, pages 623--631, 2018.

\bibitem[Anshelevich and Postl(2017)]{anshelevich2017randomized}
Elliot Anshelevich and John Postl.
\newblock Randomized social choice functions under metric preferences.
\newblock \emph{Journal of Artificial Intelligence Research}, 58:\penalty0
  797--827, 2017.

\bibitem[Anshelevich et~al.(2015)Anshelevich, Bhardwaj, and
  Postl]{anshelevich2015approximating}
Elliot Anshelevich, Onkar Bhardwaj, and John Postl.
\newblock Approximating optimal social choice under metric preferences.
\newblock In \emph{Proceedings of the 29th {AAAI} Conference on Artificial
  Intelligence ({AAAI})}, pages 777--783, 2015.

\bibitem[Anshelevich et~al.(2021)Anshelevich, Filos-Ratsikas, Shah, and
  Voudouris]{survey2021}
Elliot Anshelevich, Aris Filos-Ratsikas, Nisarg Shah, and Alexandros~A.
  Voudouris.
\newblock Distortion in social choice problems: The first 15 years and beyond.
\newblock In \emph{Proceedings of the 30th International Joint Conference on
  Artificial Intelligence {(IJCAI)}}, pages 4294--4301, 2021.

\bibitem[Anshelevich et~al.(2022)Anshelevich, Filos{-}Ratsikas, and
  Voudouris]{AFV22}
Elliot Anshelevich, Aris Filos{-}Ratsikas, and Alexandros~A. Voudouris.
\newblock The distortion of distributed metric social choice.
\newblock \emph{Artificial Intelligence}, 308:\penalty0 103713, 2022.

\bibitem[Benad{\`{e}} et~al.(2017)Benad{\`{e}}, Nath, Procaccia, and
  Shah]{benade2017participatory}
Gerdus Benad{\`{e}}, Swaprava Nath, Ariel~D. Procaccia, and Nisarg Shah.
\newblock Preference elicitation for participatory budgeting.
\newblock In \emph{Proceedings of the 31st {AAAI} Conference on Artificial
  Intelligence ({AAAI})}, pages 376--382, 2017.

\bibitem[Boutilier et~al.(2015)Boutilier, Caragiannis, Haber, Lu, Procaccia,
  and Sheffet]{boutilier2015optimal}
Craig Boutilier, Ioannis Caragiannis, Simi Haber, Tyler Lu, Ariel~D. Procaccia,
  and Or~Sheffet.
\newblock Optimal social choice functions: A utilitarian view.
\newblock \emph{Artificial Intelligence}, 227:\penalty0 190--213, 2015.

\bibitem[Cai et~al.(2016)Cai, Filos-Ratsikas, and Tang]{qcai2016minimaxenvy}
Qingpeng Cai, Aris Filos-Ratsikas, and Pingzhong Tang.
\newblock Facility location with minimax envy.
\newblock In \emph{Proceedings of the 25th International Joint Conference on
  Artificial Intelligence ({IJCAI})}, pages 137--143, 2016.

\bibitem[Caragiannis et~al.(2017)Caragiannis, Nath, Procaccia, and
  Shah]{caragiannis2017subset}
Ioannis Caragiannis, Swaprava Nath, Ariel~D. Procaccia, and Nisarg Shah.
\newblock Subset selection via implicit utilitarian voting.
\newblock \emph{Journal of Artificial Intelligence Research}, 58:\penalty0
  123--152, 2017.

\bibitem[Caragiannis et~al.(2022)Caragiannis, Shah, and Voudouris]{CSV22}
Ioannis Caragiannis, Nisarg Shah, and Alexandros~A. Voudouris.
\newblock The metric distortion of multiwinner voting.
\newblock In \emph{Proceedings of the 36th {AAAI} Conference on Artificial
  Intelligence ({AAAI})}, pages 4900--4907, 2022.

\bibitem[Chan et~al.(2021)Chan, Filos-Ratsikas, Li, Li, and
  Wang]{chan2021mechanism}
Hau Chan, Aris Filos-Ratsikas, Bo~Li, Minming Li, and Chenhao Wang.
\newblock Mechanism design for facility location problem: A survey.
\newblock In \emph{Proceedings of the 30th International Joint Conference on
  Artificial Intelligence ({IJCAI})}, pages 1--17, 2021.

\bibitem[Charikar and Ramakrishnan(2022)]{charikar2022randomized}
Moses Charikar and Prasanna Ramakrishnan.
\newblock Metric distortion bounds for randomized social choice.
\newblock In \emph{Proceedings of the 2022 ACM-SIAM Symposium on Discrete
  Algorithms ({SODA})}, pages 2986--3004, 2022.

\bibitem[Deligkas et~al.(2022)Deligkas, Filos{-}Ratsikas, and Voudouris]{DFV22}
Argyrios Deligkas, Aris Filos{-}Ratsikas, and Alexandros~A. Voudouris.
\newblock Heterogeneous facility location with limited resources.
\newblock In \emph{Proceedings of the 36th {AAAI} Conference on Artificial
  Intelligence ({AAAI})}, pages 4966--4974, 2022.

\bibitem[Ebadian et~al.(2022)Ebadian, Kahng, Peters, and
  Shah]{ebadian2022optimized}
Soroush Ebadian, Anson Kahng, Dominik Peters, and Nisarg Shah.
\newblock Optimized distortion and proportional fairness in voting.
\newblock In \emph{Proceedings of the 23rd {ACM} Conference on Economics and
  Computation ({EC})}, pages 563--600, 2022.

\bibitem[Elkind et~al.(2022)Elkind, Li, and Zhou]{elkind2022approval}
Edith Elkind, Minming Li, and Houyu Zhou.
\newblock Facility location with approval preferences: Strategyproofness and
  fairness.
\newblock In \emph{Proceedings of the 21st International Conference on
  Autonomous Agents and Multiagent Systems ({AAMAS})}, pages 391--399, 2022.

\bibitem[Feigenbaum and Sethuraman(2015)]{feigenbaum2015strategyproof}
Itai Feigenbaum and Jay Sethuraman.
\newblock Strategyproof mechanisms for one-dimensional hybrid and obnoxious
  facility location models.
\newblock In \emph{AAAI Workshops}, 2015.

\bibitem[Feldman et~al.(2016)Feldman, Fiat, and Golomb]{feldman2016voting}
Michal Feldman, Amos Fiat, and Iddan Golomb.
\newblock On voting and facility location.
\newblock In \emph{Proceedings of the 2016 {ACM} Conference on Economics and
  Computation ({EC})}, pages 269--286, 2016.

\bibitem[Filos{-}Ratsikas and Voudouris(2021)]{FV21}
Aris Filos{-}Ratsikas and Alexandros~A. Voudouris.
\newblock Approximate mechanism design for distributed facility location.
\newblock In \emph{Proceedings of the 14th International Symposium on
  Algorithmic Game Theory ({SAGT})}, pages 49--63, 2021.

\bibitem[Filos-Ratsikas et~al.(2014)Filos-Ratsikas, Frederiksen, and
  Zhang]{Aris14}
Aris Filos-Ratsikas, S{\o}ren Kristoffer~Stiil Frederiksen, and Jie Zhang.
\newblock {Social welfare in one-sided matchings: Random priority and beyond}.
\newblock In \emph{Proceedings of the 7th Symposium of Algorithmic Game Theory
  ({SAGT})}, pages 1--12, 2014.

\bibitem[Filos{-}Ratsikas et~al.(2017)Filos{-}Ratsikas, Li, Zhang, and
  Zhang]{DBLP:conf/aaai/Filos-RatsikasL15}
Aris Filos{-}Ratsikas, Minming Li, Jie Zhang, and Qiang Zhang.
\newblock Facility location with double-peaked preferences.
\newblock \emph{Autonomous Agents and Multi Agent Systems}, 31\penalty0
  (6):\penalty0 1209--1235, 2017.

\bibitem[Filos{-}Ratsikas et~al.(2020)Filos{-}Ratsikas, Micha, and
  Voudouris]{FMV20}
Aris Filos{-}Ratsikas, Evi Micha, and Alexandros~A. Voudouris.
\newblock The distortion of distributed voting.
\newblock \emph{Artificial Intelligence}, 286:\penalty0 103343, 2020.

\bibitem[Fotakis and Tzamos(2016)]{fotakis2016strategyproof}
Dimitris Fotakis and Christos Tzamos.
\newblock Strategyproof facility location for concave cost functions.
\newblock \emph{Algorithmica}, 76\penalty0 (1):\penalty0 143--167, 2016.

\bibitem[Gkatzelis et~al.(2020)Gkatzelis, Halpern, and
  Shah]{gkatzelis2020resolving}
Vasilis Gkatzelis, Daniel Halpern, and Nisarg Shah.
\newblock Resolving the optimal metric distortion conjecture.
\newblock In \emph{Proceedings of the 61st {IEEE} Annual Symposium on
  Foundations of Computer Science ({FOCS})}, pages 1427--1438, 2020.

\bibitem[Kanellopoulos et~al.(2022)Kanellopoulos, Voudouris, and Zhang]{KVZ22}
Panagiotis Kanellopoulos, Alexandros~A. Voudouris, and Rongsen Zhang.
\newblock On discrete truthful heterogeneous two-facility location.
\newblock In \emph{Proceedings of the 31st International Joint Conference on
  Artificial Intelligence ({IJCAI})}, pages 342--348, 2022.

\bibitem[Kempe(2020)]{kempe2020communication}
David Kempe.
\newblock Communication, distortion, and randomness in metric voting.
\newblock In \emph{Proceedings of the 34th {AAAI} Conference on Artificial
  Intelligence ({AAAI})}, pages 2087--2094, 2020.

\bibitem[Kizilkaya and Kempe(2022)]{kempe22veto}
Fatih~Erdem Kizilkaya and David Kempe.
\newblock Plurality veto: {A} simple voting rule achieving optimal metric
  distortion.
\newblock In \emph{Proceedings of the 31st International Joint Conference on
  Artificial Intelligence ({IJCAI})}, pages 349--355, 2022.

\bibitem[Lu et~al.(2009)Lu, Wang, and Zhou]{lu2009tighter}
Pinyan Lu, Yajun Wang, and Yuan Zhou.
\newblock Tighter bounds for facility games.
\newblock In \emph{Proceedings of the 5th International Workshop Internet and
  Network Economics ({WINE})}, pages 137--148, 2009.

\bibitem[Lu et~al.(2010)Lu, Sun, Wang, and Zhu]{lu2010asymptotically}
Pinyan Lu, Xiaorui Sun, Yajun Wang, and Zeyuan~Allen Zhu.
\newblock Asymptotically optimal strategy-proof mechanisms for two-facility
  games.
\newblock In \emph{Proceedings of the 11th {ACM} Conference on Electronic
  Commerce ({EC})}, pages 315--324, 2010.

\bibitem[Ma et~al.(2021)Ma, Menon, and Larson]{ma2020matching}
Thomas Ma, Vijay Menon, and Kate Larson.
\newblock Improving welfare in one-sided matchings using simple threshold
  queries.
\newblock In \emph{Proceedings of the 30th International Joint Conference on
  Artificial Intelligence ({IJCAI})}, pages 321--327, 2021.

\bibitem[Mandal et~al.(2019)Mandal, Procaccia, Shah, and
  Woodruff]{mandal2019thrifty}
Debmalya Mandal, Ariel~D. Procaccia, Nisarg Shah, and David~P. Woodruff.
\newblock Efficient and thrifty voting by any means necessary.
\newblock In \emph{Proceedings of the 32nd Annual Conference on Neural
  Information Processing Systems ({NeurIPS})}, pages 7178--7189, 2019.

\bibitem[Mandal et~al.(2020)Mandal, Shah, and Woodruff]{mandal2020optimal}
Debmalya Mandal, Nisarg Shah, and David~P. Woodruff.
\newblock Optimal communication-distortion tradeoff in voting.
\newblock In \emph{Proceedings of the 21st {ACM} Conference on Economics and
  Computation ({EC})}, pages 795--813, 2020.

\bibitem[Procaccia and Rosenschein(2006)]{procaccia2006distortion}
Ariel~D. Procaccia and Jeffrey~S. Rosenschein.
\newblock The distortion of cardinal preferences in voting.
\newblock In \emph{International Workshop on Cooperative Information Agents
  ({CIA})}, pages 317--331, 2006.

\bibitem[Procaccia and Tennenholtz(2013)]{PT09}
Ariel~D. Procaccia and Moshe Tennenholtz.
\newblock Approximate mechanism design without money.
\newblock \emph{{ACM} Transactions on Economics and Computation}, 1\penalty0
  (4):\penalty0 18:1--18:26, 2013.

\bibitem[Serafino and Ventre(2016)]{serafino2016}
Paolo Serafino and Carmine Ventre.
\newblock Heterogeneous facility location without money.
\newblock \emph{Theoretical Computer Science}, 636:\penalty0 27--46, 2016.

\bibitem[Xu et~al.(2021)Xu, Li, Li, and Duan]{duan2019heterogeneous}
Xinping Xu, Bo~Li, Minming Li, and Lingjie Duan.
\newblock Two-facility location games with minimum distance requirement.
\newblock \emph{Journal of Artificial Intelligence Research}, 70:\penalty0
  719--756, 2021.

\bibitem[Zhou et~al.(2022)Zhou, Li, and Chan]{Zhou2022group-fair}
Houyu Zhou, Minming Li, and Hau Chan.
\newblock Strategyproof mechanisms for group-fair facility location problems.
\newblock In \emph{Proceedings of the 31st International Joint Conference on
  Artificial Intelligence ({IJCAI})}, pages 613--619, 2022.

\end{thebibliography}

\end{document}